\numberwithin{equation}{section}
\theoremstyle{plain}
\newtheorem{lemma}{Lemma}[section]
\newtheorem{thm}{Theorem}[section]
\newtheorem{defn}{Definition}[section]
\newtheorem{prop}{Proposition}[section]
\newtheorem{cor}{Corollary}[section]
\newtheorem{assumption}{Assumption}[section]
\newtheorem{remark}{Remark}[section]
\newtheorem{Example}{Example}[section]
\newcommand{\E}{\mathbb{E}}
\newcommand{\PP}{{\mathbb{P}}}
\newcommand{\eps}{{\varepsilon}}
\def\@setcopyright{}
\def\serieslogo@{}
\begin{document}

\title[]{Equilibrium concepts for time-inconsistent stopping problems in continuous time}
\thanks{E. Bayraktar is supported in part by the National Science Foundation under grant DMS-1613170 and by the Susan M. Smith Professorship.}

\author{Erhan Bayraktar}
\address[Erhan Bayraktar]{Department of Mathematics, University of Michigan.}
\email{erhan@umich.edu}
\author{Jingjie Zhang }
\address[Jingjie Zhang]{Department of Mathematics, University of Michigan.}
\email{jingjiez@umich.edu}
\author{Zhou Zhou}
\address[Zhou Zhou]{School of Mathematics and Statistics, University of Sydney.}
\email{zhou.zhou@sydney.edu.au}

\begin{abstract}
A \emph{new} notion of equilibrium, which we call \emph{strong equilibrium}, is introduced for time-inconsistent stopping problems in continuous time. Compared to the existing notions introduced in \cite{Decreasing} and \cite{Christensen1}, which in this paper are called \emph{mild equilibrium} and \emph{weak equilibrium} respectively, a strong equilibrium captures the idea of subgame perfect Nash equilibrium more accurately. When the state process is a continuous-time Markov chain and the discount function is log sub-additive, we show that an optimal mild equilibrium is always a strong equilibrium. Moreover, we provide a new iteration method that can directly construct an optimal mild equilibrium and thus also prove its existence.
\end{abstract}

\keywords{Time-inconsistency, optimal stopping, strong equilibria, weak equilibria, mild equilibria, non-exponential discounting, subgame perfect Nash equilibrium.}

\maketitle

\tableofcontents

\section{Introduction}
On a filtered probability space $(\Omega,\mathcal{F},(\mathcal{F}_t)_{t\in[0,\infty)},\mathbb{P})$ consider an optimal stopping problem in continuous time
\begin{equation}\label{P1}
    \sup_{\tau \in \mathcal{T}} \E_x[\delta(\tau) X_{\tau}],
\end{equation}
where $X= (X_t)_{t\in [0, \infty)}$ is a time-homogeneous Markov process taking values in some space $\mathbb{X}\subset \mathbb{R}$, $\mathcal{T}$ is a set of stopping times, $\delta$ is a discount function, and $\E_x$ is the expectation given $X_0=x$. 
It is well known that when $\delta$ is not exponential, the problem \eqref{P1} may be time-inconsistent. That is,  the optimal stopping strategy obtained today may not be optimal in the eyes of future selves.  { There are mainly three ways to approach such time inconsistency: pre-committed strategy, naive strategy and consistent planning strategy, dating back to   \cite{Strotz}. Our paper focuses on consistent planning, which is formulated as a subgame perfect Nash equilibrium: once an equilibrium strategy is enforced over the planning horizon, the current self has no incentive to deviate from it, given all future selves will follow the equilibrium strategy. For discussions on different approaches, see \cite{Pollak}, \cite{Ekeland}, \cite{B2014}, \cite{B2017}, \cite{Christensen1}, \cite{ProbabilityDistortion}, \cite{Decreasing} and references therein. }

There are two general notions of equilibrium stopping strategies in continuous time in the literature. The first notion is proposed in \cite{Decreasing} and further studied in \cite{ProbabilityDistortion,HZOptimalContinous,2019arXiv190601232H}, which we will call mild equilibrium in this paper.  Following \cite[Definition 3.3]{Decreasing} and \cite[Definition 2.2]{HZOptimalContinous}, we have the following definition of mild equilibrium.

\begin{defn}\label{d1}
A measurable set $S \subset \mathbb{X}$ is said to be a mild equilibrium, if
\begin{numcases}{}
\label{e1} x \le \E_x[\delta(\tau_S)X_{\tau_S}], \quad \forall\,x\notin S,\\
\label{ee2} x\geq \E_x[\delta(\tau_S^+)X_{\tau_S}], \quad \forall\,x\in S,
\end{numcases}
where
\begin{equation}\label{e4}
\tau_S:= \inf\{t \ge 0: X_t \in S\},\quad\text{and}\quad \tau_S^+:= \inf\{t > 0: X_t \in S\}.
\end{equation}
\end{defn}
In the above $S$ is the stopping region, and the economic interpretation for Definition \ref{d1} is clear, there is no incentive to deviate. That is, in \eqref{e1} when $x\notin S$, it is better to continue and get $\E_x[\delta(\tau_S)X_{\tau_S}]$, rather than to stop and get $x$; \textit{on the surface} a similar statement applies to \eqref{ee2}. However, when the time of return for $X$ is $0$ (i.e., { $\mathbb{P}(\tau^+_{\{x\}}=0\,|\,X_0=x)=1$ }), which is satisfied for continuous-time Markov chain and many one-dimension diffusion processes, $\tau_S=\tau_S^+$ and thus \eqref{ee2} trivially holds. In other words, when the time of return is 0, there is no actual deviation captured by \eqref{ee2} from stopping to continuing, and Definition \ref{d1} is equivalent to the following.

\begin{defn}\label{Def_mild}
A measurable set $S \subset \mathbb{X}$ is said to be a mild equilibrium, if
\begin{equation}\label{e5}
x \le \E_x[\delta(\tau_S)X_{\tau_S}]=:J(x,S), \quad \forall\,x\notin S.
\end{equation}
\end{defn}

Consequently, with the time of return being $0$ the notion of mild equilibrium cannot fully capture the economic meaning of equilibrium. It is easy to see that $S=\mathbb{X}$ is always a mild equilibrium, and it is not clear why always stopping immediately is a reasonable strategy. { Notice that in discrete time there is no such degeneration issue for equilibrium since $\tau_S^+= \inf\{t \ge 1: X_t \in S\}$ in discrete time setting. See \cite[Remark 2.3]{HZOptimalDiscrete} and \cite[Definition 2.2]{doi:10.1137/18M1216432}.}

As can be seen from \cite{Decreasing,ProbabilityDistortion,HZOptimalContinous,2019arXiv190601232H}, there is often a continuum of mild equilibria in many natural models, which naturally leads to the question of equilibrium selection. In \cite{HZOptimalContinous}, optimal mild equilibrium in the sense of point-wise dominance is considered. In particular, from \cite[Definition 2.3]{HZOptimalContinous} we have the following definition.

\begin{defn}A mild equilibrium $S^*$ is said to be optimal, if for any other mild equilibrium $S$,
$$x\vee J(x, S^*) \ge x\vee J(x, S) (\Longleftrightarrow J(x, S^*)\geq J(x,S)),\quad\forall\,x\in\mathbb{X}.$$
\end{defn}

Note that $x\vee J(x, S)$ represents the value associated with the stopping region/strategy $S$. In \cite{HZOptimalContinous} the existence of optimal equilibrium is established. A discrete-time version is in \cite{HZOptimalDiscrete}. 

The second notion of equilibrium, which we call weak equilibrium in this paper, is proposed in \cite{Christensen1} and further investigated in \cite{Christensen2}. Following \cite{Christensen1}, we have the definition of weak equilibrium (we adapt the definition slightly for our setting).
\begin{defn} \label{Def_weak} A measurable set $S \subset \mathbb{X}$ is said to be a weak equilibrium, if
\begin{numcases}{}
x \le \E_x[\delta(\tau_S)X_{\tau_S}], \quad \forall  x\notin S,\notag\\
\liminf_{\varepsilon \searrow 0}\frac{x - \E_x[\delta(\tau_S^{\varepsilon}) X_{\tau_S^{\varepsilon}}]}{\varepsilon} \ge 0, \quad \forall x \in S,\label{ii}
\end{numcases}
where
\begin{equation}\label{e6}
\tau_S^{\varepsilon} = \inf\{t\ge \varepsilon: X_t \in S\}.
\end{equation}
\end{defn}

Compared to \eqref{ee2}, the first-order condition \eqref{ii} does capture the deviation from stopping to continuing. However, similar to that for time-inconsistent control (see e.g., \cite[Remark 3.5]{B2017} and \cite{StrongWeak}), the first-order criterion does not correspond to the equilibrium concept perfectly: when the limit in \eqref{ii} equals zero, it is possible that for all $\varepsilon> 0,$ $x < \E_x[\delta(\tau_S^{\varepsilon}) X_{\tau_S^{\varepsilon}}]$, in which case there is an incentive to deviate.

To sum up, the economic interpretation of being ``equilibrium'' for mild and weak ones is inadequate. { There are similar issues in continuous-time time-inconsistent stochastic control problems as mentioned in \cite[Remark 3.5]{B2017}. In response to \cite[Remark 3.5]{B2017}, a new definition of continuous-time equilibrium control is introduced in \cite{StrongWeak}. 
In time-inconsistent optimal stopping problems,  we introduce the following concept of strong equilibrium, which is inspired by \cite{StrongWeak}. }
\begin{defn} \label{Def_Strong}A measurable set $S \subset \mathbb{X}$ is said to be a strong equilibrium, if
\begin{numcases}{}
x \le \E_x[\delta(\tau_S)X_{\tau_S}], \quad \forall  x\notin S,\notag\\
{
\exists\,\, \varepsilon=\varepsilon(x)>0, \text{s.t. } \forall\,\varepsilon'\in(0,\varepsilon),\  x \ge \E_x[\delta(\tau_S^{\varepsilon'}) X_{\tau_S^{\varepsilon'}}], \quad \forall x \in S.
}\label{ii2}
\end{numcases}
\end{defn}

Compared to \eqref{ee2} and \eqref{ii}, condition \eqref{ii2} not only captures the deviation from stopping to continuing, but also more precisely indicates the disincentive of such deviation. Consequently, a strong equilibrium delivers better economic meaning as being an ``equilibrium''.

In this paper, when $X$ is a Markov chain we show that an optimal mild equilibrium is a strong equilibrium (see Theorem \ref{Connection}). (Obviously, a strong equilibrium is also weak, and a weak equilibrium is also mild.)  We also provide examples showing that a strong equilibrium may not be an optimal mild equilibrium, and a weak equilibrium may not be strong. Therefore, we thoroughly obtain the relation between mild, weak, strong, and optimal mild (and thus optimal weak, optimal strong) equilibria. Moreover, we provide a new iteration method which directly constructs an optimal mild equilibrium and thus also establish its existence (see Thoerem \ref{Existence}). In \cite{HZOptimalDiscrete,HZOptimalContinous}, an optimal equilibrium is constructed by the intersection of all (mild) equilibriums. In principle, this requires us to first find all (mild) equilibria in order to get the optimal one, which may not be implementable in many cases. The new iteration method proposed in this paper is much easier and more efficient to implement. { Examples are provided to demonstrate the application of the new iteration method (see Example 3.1 and Example 3.2).} It would be interesting to see whether such results can be extended to diffusion models, which we will leave for future research.

As in reality people often discount non-exponentially, our results can be applied to stopping problems in finance ad economics. { Generally we can use $X = f(Y)$ for some nonnegative payoff function $f$ and some price process of underlying asset $Y$. Our results still hold and the proofs still work when replacing $X$ with $f(Y)$. For instance, in Example 3.2, $Y$ is a stock price process and  $X = f(Y)$ is the payoff of an American put option. This can be viewed as an example of exercising an American option when the investor tries to maximize the expected payoff yet subject to hyperbolic discounting. We refer to \cite[Section 5]{HZOptimalDiscrete} and \cite[Section 6.3]{HZOptimalContinous} for more such examples. }The two-state example provided in Section 4 of this paper can also be thought of as an application of stopping (e.g., selling a house) when the economy (e.g., property market) is good/bad. Our paper is inline with the work \cite{GW07}, where equilibrium stopping strategies are considered in an entrepreneur's investment-timing problem under time-inconsistent preferences due to quasi-hyperbolic discounting. People have considered to incorporate non-exponential discounting into decision making including optimal stopping. However, this leads to time inconsistency as argued by Grenafier and Wang in \cite{GW07}. They proposed the time-consistent modeling framework and our result can be seen as making advances on proposing better equilibrium concepts in this line of work.
Let us also mention that the notion of strong equilibria also applies to other types of time-inconsistent stopping, such as mean-variance stopping problems and stopping under probability distortion.\footnote{For instance, consider $G(x,\tau)$, where $x$ is the initial position for the underlying process $X$, and $G$ is payoff utility. For example, $G(x,\tau)=\E_x[\delta(\tau)f(X_\tau)]$ for stopping with non-exponential discounting, $G(x,\tau)=\E_x[f(X_\tau)]-c\text{Var}_x[f(X_\tau)]$ for mean-variance stopping, $G(x,\tau)=\int_0^\infty w(\mathbb{P}[f(X_\tau^x)>y])dy$ for stopping under probability distortion.
Note that $G(x,0)=f(x)$. Then in general strong equilibria can be formulated accordingly as: $S \subset \mathbb{X}$ is said to be a strong equilibrium, if
\begin{numcases}{}
\notag f(x) \le G(x,\tau_S), \quad \forall  x\notin S,\notag\\
\notag  {  \exists\,\,\varepsilon=\varepsilon(x) >0, \text{s.t. } \forall\,\varepsilon'\in(0,\varepsilon),\  f(x) \ge G(x,\tau_S^{\varepsilon'})}, \quad \forall x \in S.
\end{numcases}
The mild and weak equilibria can also be defined accordingly, and they still suffer from being short of economic meaning.
}

{ This paper provides very novel and conceptual contributions in the topic of time-inconsistent stopping. First, we analyze existing notions of equilibrium and their inadequacy in continuous time. A new notion of equilibrium, strong equilibrium is introduced. It captures the economic meaning of being ``equilibrium" more accurately. Second, we show that an optimal mild equilibrium is also a strong equilibrium, which is far from obvious. This result together with the examples in this paper completely shows the relations between mild, weak, strong, optimal mild/weak/strong equilibria. No such result has been obtained before. Moreover, we completely obtain the existence and (non)uniqueness results of these equilibria.  Third, we provide an iteration method, which directly constructs an optimal equilibrium and is much more implementable than the existing method in \cite{HZOptimalDiscrete,HZOptimalContinous}. Moreover, although the proofs are relatively short, by no means they are trivial, routine or easy to come up with. Those key ideas provide some novel proof approaches in the literature of time-inconsistent control/stopping. Let us mention the recent work of \cite{https://dx.doi.org/10.2139/ssrn.3308274}, where the authors also discuss notions of equilibrium control based on condition (1.3) in \cite{StrongWeak}. The focus of their paper is to distinguish between weak and strong (and regular) equilibrium controls. It is intuitively like distinguishing between local maxima and critical point.  In our paper we not only distinguish between strong, weak and mild equilibrium stopping times, but also obtain that an optimal mild equilibrium is a strong equilibrium. The notions of mild equilibrium and optimal mild equilibrium only make sense for stopping problems not control problems. Thus the focus of our paper is different from  \cite{https://dx.doi.org/10.2139/ssrn.3308274} and our intuitively unexpected result makes a novel contribution to the literature.
}

The rest of the paper is organized as follows. Section 2 collects the main results of the paper. An optimal mild equilibrium is proved to be a strong equilibrium, and can be directly constructed via a new iteration method.  {Section 3 provides examples to illustrate the iteration method in Theorem \ref{Existence}.} Section 4 focuses on a concrete two-state model, which demonstrates the differences between these equilibria.
 
\section{ The Main Results}
In this section, we apply the concepts in Section 1 to a continuous-time Markov chain and  present our main results under this setting. Let $X = (X_t)_{t\ge 0}$ be a time-homogeneous continuous-time Markov chain. It has a finite or countably infinite state space $\mathbb{X}\subset[0,\infty)$. Let $\lambda_x$ be the transition rate out of the state $x \in \mathbb{X}$, and $q_{xy}$ be the transition rate from state $x$ to $y$ for $y \ne x$. Then we have that $\lambda_x = \sum_{y \ne x} q_{xy}$. The discount function $t\mapsto\delta(t)$ is assumed to be non-exponential and decreasing, with $\delta(0)=1$ and $\lim_{t\to\infty}\delta(t)=0$. Let the filtration $(\mathcal{F}_t)_{t\in[0,\infty)}$ be generated by $X$. Furthermore, we make the following assumptions on $X$ and $\delta(\cdot)$.

\begin{assumption}\label{A1}
(i) $C:=\sup{\mathbb{X}}<\infty$ and $\lambda := \sup_{x\in \mathbb{X}}\lambda_x < \infty$.
 
(ii) $X$ is irreducible, i.e., for any $x ,y \in \mathbb{X}, \,\, \inf\{t\ge 0: X_t = y\,|\,X_0 =x\} < \infty$, a.s.. 
\end{assumption}

\begin{assumption} \label{A2}
(i) $\delta$ is log-subadditive, i.e.,
\begin{equation}\label{ee3}
\delta(s)\delta(t) \le \delta(s+t), \quad \forall\,s, t > 0.
\end{equation}

(ii) $t\mapsto\delta(t)$ is differentiable at $t=0$, and $\delta'(0)<0$.
\end{assumption}
 
 \begin{remark}
Assumption \ref{A2} (i) is closely related to {\it decreasing impatience} ($DI$) in Behavioral Finance and Economics. \footnote{\color{black}As mentioned in \cite{Decreasing}: ``It is well-documented in empirical studies, e.g. \cite{LP92,LT89,Thaler81}, that people admits $DI$: when choosing between two rewards, people are more willing to wait for the larger reward (more patient) when these two rewards are further away in time. For instance, in the two scenarios (i) getting \$100 today or \$110 tomorrow, and (ii) getting \$100 in 100 days or \$110 in 101 days, people tend to choose \$100 in (i), but \$110 in (ii)."}
Following \cite[Definition 1]{Prelec04} and \cite{Noor2009a}, the discount function $\delta$ induces $DI$ if
\begin{equation}\label{DI def}
s\mapsto \frac{\delta(s+t)}{\delta(s)}\ \hbox{is strictly increasing,}\quad\forall\,t>0.
\end{equation}
Observe that \eqref{DI def} implies \eqref{ee3}, since $\delta(s + t)/\delta(s)\ge \delta(t)/\delta(0) = \delta(t)$ for all $s,t\ge 0$.

Note that hyperbolic, generalized hyperbolic, quasi-hyperboic, pseudo-exponential discount functions all induce DI, and thus satisfy Assumption \ref{A2} (i). Consequently, \eqref{ee3} is often used when studying problems involving non-exponential discounting; see e.g., \cite{Decreasing,HZOptimalDiscrete,HZOptimalContinous}. 
 \end{remark}
 
 The following is the first main result of this paper, which shows that an optimal mild equilibrium is a strong equilibrium. The proof is provided in Section 2.1.
  \begin{thm}\label{Connection}
 Let Assumptions \ref{A1} and \ref{A2} hold. If $S$ is an optimal mild equilibrium, then it is a strong equilibrium.
 \end{thm}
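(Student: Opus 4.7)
The plan is as follows. Since $S^*$ is already a mild equilibrium, the ``continuation'' half of Definition~\ref{Def_Strong} is part of the hypothesis, and the remaining content is: for every $x\in S^*$ there exists $\varepsilon(x)>0$ such that $\psi(\varepsilon') := \E_x[\delta(\tau_{S^*}^{\varepsilon'}) X_{\tau_{S^*}^{\varepsilon'}}] \le x$ for all $\varepsilon'\in(0,\varepsilon(x))$. A direct first-jump expansion of $\psi$ at $\varepsilon'=0$, using $\PP_x(\sigma>\varepsilon')=e^{-\lambda_x\varepsilon'}$ for the first jump time $\sigma$ out of $x$ (Assumption~\ref{A1}) together with differentiability of $\delta$ at $0$ (Assumption~\ref{A2}(ii)), gives $\psi(\varepsilon')-x = \varepsilon'\,K(x) + o(\varepsilon')$, where $K(x):=\delta'(0)\,x + \sum_{y\ne x} q_{xy}(J(y,S^*)-x)$. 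It therefore suffices to establish the \emph{strict} inequality $K(x)<0$ for every $x\in S^*$; if only $K(x)\le 0$ were known, the sign of $\psi(\varepsilon')-x$ at second order would be left open.

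The first main step toward $K(x)<0$ is the structural lemma: for every $x\in S^*$, $J(x,\tilde S)\le x$, where $\tilde S := S^*\setminus\{x\}$. Argue by contradiction and assume $J(x,\tilde S)>x$. Conditioning on $\mathcal{F}_{\tau_{S^*}}$, one has $\tau_{\tilde S}=\tau_{S^*}$ on $\{X_{\tau_{S^*}}\ne x\}$, while on $\{X_{\tau_{S^*}}=x\}$ the strong Markov property gives $\tau_{\tilde S}=\tau_{S^*}+\tilde\tau$ with $\tilde\tau$ an independent fresh first-hitting time of $\tilde S$ from $x$. Log sub-additivity $\delta(\tau_{S^*}+\tilde\tau)\ge\delta(\tau_{S^*})\delta(\tilde\tau)$ (Assumption~\ref{A2}(i)) then yields $J(y,\tilde S)\ge J(y,S^*) + \E_y[\delta(\tau_{S^*})\,1_{\{X_{\tau_{S^*}}=x\}}]\,(J(x,\tilde S)-x) \ge J(y,S^*)$ for every $y$, so $\tilde S$ is itself a mild equilibrium. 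But $J(x,\tilde S)>x=J(x,S^*)$ contradicts the optimality of $S^*$.

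The second main step promotes $J(x,\tilde S)\le x$ into the strict inequality $K(x)<0$. A first-jump decomposition of $J(x,\tilde S)$ combined with log sub-additivity yields $J(x,\tilde S)\ge \tilde\delta(\lambda_x)\cdot\frac{1}{\lambda_x}\sum_{y\ne x} q_{xy}\,J(y,\tilde S)$, with $\tilde\delta(\lambda):=\int_0^\infty \lambda e^{-\lambda u}\delta(u)\,du$. Comparing $J(\cdot,\tilde S)$ with $J(\cdot,S^*)$ via the reverse (monotonicity) bound $\delta(\tau_{S^*}+\tilde\tau)\le\delta(\tilde\tau)$ together with $J(x,\tilde S)\le x$ upgrades this to $\frac{1}{\lambda_x}\sum_{y\ne x} q_{xy}\,J(y,S^*) \le x/\tilde\delta(\lambda_x)$. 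Next, log sub-additivity and $\delta(0)=1$ force $\log\delta$ to be superadditive, hence $\delta(u)\ge e^{\delta'(0)u}$ for all $u\ge 0$; since $\delta$ is assumed non-exponential, the inequality is strict on a set of positive Lebesgue measure, so that $\tilde\delta(\lambda_x)>\lambda_x/(\lambda_x-\delta'(0))$ strictly. Substituting back delivers $\sum_{y\ne x} q_{xy}(J(y,S^*)-x) < -\delta'(0)x$, i.e., $K(x)<0$. The main technical difficulty I anticipate is the bookkeeping that reconciles the two opposite-direction uses of log sub-additivity in this step: when $J(x,\tilde S)<x$ strictly, the lower-bound inequality for $J(y,\tilde S)$ points the ``wrong'' way for a clean comparison of $\frac{1}{\lambda_x}\sum q_{xy}J(y,\tilde S)$ with $\frac{1}{\lambda_x}\sum q_{xy}J(y,S^*)$, so the reverse monotonicity bound and $J(x,\tilde S)\le x$ must be used simultaneously, with the strictness extracted only from the non-exponentiality hypothesis on $\delta$.
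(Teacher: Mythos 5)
Your overall architecture is the same as the paper's: reduce the problem to a strict first-order inequality $K(x)<0$ via the first-jump expansion (the paper's Proposition \ref{p1} and Corollary \ref{rmk2}); prove the structural lemma $J(x,S^*\setminus\{x\})\le x$ by combining log-subadditivity with optimality (the paper's Lemmas \ref{L11} and \ref{L12}); and extract strictness from $\E_x[\delta(T_x)]>\lambda_x/(\lambda_x-\delta'(0))$, which holds because $\delta\ge e^{\delta'(0)\cdot}$ with strict inequality somewhere since $\delta$ is non-exponential. Your step 2 is exactly the paper's argument, and your target inequality $\frac{1}{\lambda_x}\sum_{y\ne x}q_{xy}J(y,S^*)\le x/\E_x[\delta(T_x)]$ is true and does finish the proof (modulo also checking $0\notin S^*$, which you need so that either $x>0$ or $\sum_{y\ne x}q_{xy}J(y,S^*)>0$ makes the final inequality strict; this follows in one line from your structural lemma).

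The one step that does not go through as written is the justification in step 3 for passing from $\hat{\mathrm{II}}:=\frac{1}{\lambda_x}\sum_{y\notin S^*}q_{xy}J(y,\tilde S)$ to $\mathrm{II}:=\frac{1}{\lambda_x}\sum_{y\notin S^*}q_{xy}J(y,S^*)$. The ``reverse'' bound $\delta(\tau_{S^*}+\tilde\tau)\le\delta(\tilde\tau)$ produces an \emph{upper} bound on $J(y,\tilde S)$, equivalently a \emph{lower} bound on $J(y,S^*)$, which is useless for upper-bounding $\mathrm{II}$. The inequality you actually need is the log-subadditivity bound you already proved in step 2, read in the other direction: for $y\notin S^*$,
\begin{equation*}
J(y,S^*)-J(y,\tilde S)\;\le\;\E_y\bigl[\delta(\tau_{S^*})\1_{\{X_{\tau_{S^*}}=x\}}\bigr]\,\bigl(x-J(x,\tilde S)\bigr).
\end{equation*}
Summing against $q_{xy}/\lambda_x$ gives $\mathrm{II}\le\hat{\mathrm{II}}+\theta\,(x-J(x,\tilde S))$ with $\theta:=\sum_{y\notin S^*}\frac{q_{xy}}{\lambda_x}\E_y[\delta(\tau_{S^*})\1_{\{X_{\tau_{S^*}}=x\}}]\le 1$, and then
\begin{equation*}
(\mathrm{I}+\mathrm{II})\,\E_x[\delta(T_x)]\;\le\;(\mathrm{I}+\hat{\mathrm{II}})\,\E_x[\delta(T_x)]+\theta\,\E_x[\delta(T_x)]\,(x-J(x,\tilde S))\;\le\;J(x,\tilde S)+(x-J(x,\tilde S))\;=\;x,
\end{equation*}
using $\theta\,\E_x[\delta(T_x)]\le1$ and $x-J(x,\tilde S)\ge0$. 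This is precisely the paper's computation leading to \eqref{e10}, only organized so that the strict constant $c_x$ is substituted at the end rather than at the start. So the gap is one of attribution rather than substance: replace the reverse monotonicity bound by a second application of your step-2 inequality and the proof closes.
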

Since all mild equilibria are strong equilibria, an optimal mild equilibrium will generate larger values than any strong equilibrium as well. With Theorem \ref{Connection}, we can conclude that any optimal mild equilibrium is a strong equilibrium and in fact is an optimal strong equilibrium.

The following is the second main result of this paper. It provides an iteration method which directly constructs an optimal mild equilibrium, and thus also establishes the existence of weak, strong, and optimal mild equilibria. The proof of this result is presented in Section 2.2.

\begin{thm}\label{Existence}
Let $S_0 := \emptyset$, and
\begin{equation}\label{e8}
S_{n+1} := S_n \cup \left\{x\in\mathbb{X}\setminus S_n: x > \sup_{S: S_n \subset S \subset \mathbb{X}\backslash \{x\}} J(x, S)\right\}.
\end{equation}
Let
\begin{equation}\label{e7}
S_{\infty}:=\cup_{n=0}^\infty S_n.
\end{equation}
If Assumptions \ref{A1} (i) and \ref{A2} (i) hold, then $S_\infty$ is an optimal mild equilibrium. If in addition Assumption \ref{A2} (ii) holds, then $S_\infty$ is a strong equilibrium.
\end{thm}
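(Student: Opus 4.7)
The plan is to verify five claims in sequence: (i) the sequence $(S_n)$ is increasing, so $S_\infty$ is well-defined and measurable; (ii) $S_\infty$ is contained in every mild equilibrium; (iii) $S_\infty$ is itself a mild equilibrium; (iv) $J(x,S_\infty) \ge J(x,S)$ for every mild equilibrium $S$; and (v) under Assumption \ref{A2}(ii), $S_\infty$ is a strong equilibrium. Claim (i) is immediate from \eqref{e8}. For (ii), I proceed by induction on $n$: if $S_n \subset S$ for a mild equilibrium $S$ and $x \in S_{n+1}\setminus S_n$, then $S$ itself is admissible in the supremum defining $S_{n+1}$, so $J(x,S) < x$, contradicting \eqref{e5} at $x$; hence $x \in S$ and the induction closes.

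Granting (iii), claim (iv) follows by applying the strong Markov property at $\tau_S$ (noting $S_\infty \subset S$ by (ii), so $\tau_S \le \tau_{S_\infty}$) together with the log-subadditive inequality $\delta(\tau_S + (\tau_{S_\infty}-\tau_S)) \ge \delta(\tau_S)\delta(\tau_{S_\infty}-\tau_S)$ on the event $\{X_{\tau_S} \in S \setminus S_\infty\}$, which yields
\begin{equation*}
J(x,S_\infty) \ge \E_x\bigl[\delta(\tau_S)\bigl(\1_{X_{\tau_S}\in S_\infty}X_{\tau_S} + \1_{X_{\tau_S}\in S\setminus S_\infty} J(X_{\tau_S},S_\infty)\bigr)\bigr].
\end{equation*}
The mild-equilibrium property $J(y,S_\infty) \ge y$ for $y \in S \setminus S_\infty$ (from (iii)) then collapses the right-hand side to $J(x,S)$.

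The main obstacle is claim (iii): showing $x \le J(x,S_\infty)$ for $x \notin S_\infty$. From the iteration, $x \notin S_n$ for every $n$ gives $x \le \sup_{S:\,S_n\subset S\subset\mathbb{X}\setminus\{x\}} J(x,S)$ for every $n$, and $S_\infty$ is admissible in each such supremum; however, admissibility only delivers $J(x,S_\infty)\le \sup$, going the wrong direction. My strategy is to exploit Assumption \ref{A2}(i) to show that the supremum is essentially attained by $S = S_\infty$ in the limit $n\to\infty$: given an admissible $S$ with $S_n\subset S\subset\mathbb{X}\setminus\{x\}$, I decompose $J(x,S)$ via strong Markov at $\tau_S$ and use log-subadditivity to factor the discount through any subsequent hit of $S_\infty$, obtaining bounds of the form $J(x,S) \le \E_x[\delta(\tau_S) J(X_{\tau_S}, S_\infty)]$. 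Iterating over states in $S \setminus S_\infty$ and passing to the limit $n\to\infty$ via bounded convergence (using $X \le C$ from Assumption \ref{A1}(i)) should collapse the bound to $J(x, S_\infty)$. The delicacy is to execute this decomposition without circularly invoking the very mild-equilibrium property we aim to establish.

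Finally, for (v), fix $x \in S_\infty$ and let $n$ be the unique index with $x \in S_{n+1}\setminus S_n$. The strict inequality $x > \sup_{S:\,S_n\subset S\subset\mathbb{X}\setminus\{x\}} J(x,S)$ furnishes a positive gap. Conditioning on the first jump time $T_1 \sim \mathrm{Exp}(\lambda_x)$ and using $\delta(\varepsilon') = 1 + \delta'(0)\varepsilon' + o(\varepsilon')$ from Assumption \ref{A2}(ii), I expand $g(\varepsilon') := \E_x[\delta(\tau_{S_\infty}^{\varepsilon'}) X_{\tau_{S_\infty}^{\varepsilon'}}]$ to first order, obtaining
\begin{equation*}
g(\varepsilon') = x + \varepsilon'\bigl[x\delta'(0) + \textstyle\sum_{y\ne x} q_{xy}(J(y,S_\infty) - x)\bigr] + o(\varepsilon').
\end{equation*}
Combining $\delta'(0)<0$ with the strict gap (applied to the admissible candidate $S_\infty\setminus\{x\}$, whose value $J(x, S_\infty\setminus\{x\})$ is expressible via the first-jump decomposition in terms of the same quantities $q_{xy}$ and $\alpha_x := \E[\delta(T_1)]$) should yield that the bracket is nonpositive, so $g(\varepsilon') \le x$ for all $\varepsilon' \in (0,\varepsilon(x))$, which is exactly \eqref{ii2}.
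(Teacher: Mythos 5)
Your steps (i), (ii) and (iv) are sound and match the paper's Lemma \ref{L21} and its Corollary \ref{c1}. The two places where the argument actually has to be made, however, are left as unresolved sketches. For claim (iii) you correctly identify the circularity — showing that $\sup_{S:\,S_\infty\subset S\subset\mathbb{X}\setminus\{x\}}J(x,S)$ collapses to $J(x,S_\infty)$ is exactly the statement that a smaller set dominates a larger one, which the paper's comparison lemma only gives when the smaller set is already known to be a mild equilibrium — but your proposed escape (``iterating over states in $S\setminus S_\infty$ \ldots should collapse the bound'') is not an argument, and the intermediate bound $J(x,S)\le\E_x[\delta(\tau_S)J(X_{\tau_S},S_\infty)]$ you write down requires $X_{\tau_S}\le J(X_{\tau_S},S_\infty)$, i.e.\ precisely the mild-equilibrium property you are trying to prove. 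The paper's resolution is a quantitative contraction that you are missing: set $\alpha:=\sup_x\{x-J(x,S_\infty)\}$ and suppose $\alpha>0$; for a near-maximizing $y$ one produces (via your Lemma-\ref{L22}-type limit, which you do gesture at correctly) a set $R\supset S_\infty$ with $y\le J(y,R)+\eps$, and then bounds $J(y,R)-J(y,S_\infty)\le\E_y[\delta(\rho(y,R))]\,\alpha\le c\,\alpha$ with $c=\int_0^\infty\delta(t)\lambda e^{-\lambda t}\,dt<1$, because $\rho(y,R)\ge T_y$. Comparing with $J(y,R)-J(y,S_\infty)\ge\alpha-2\eps$ gives the contradiction. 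No appeal to the unproven mild-equilibrium property is needed — only the \emph{definition} of $\alpha$ as the worst violation — and the strict discount factor $c<1$ is what breaks the circle.

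Claim (v) has two further gaps. First, concluding that the first-order coefficient in your expansion of $g(\eps')$ is merely \emph{nonpositive} does not give \eqref{ii2}: if the coefficient vanishes, $g(\eps')=x+o(\eps')$ and the sign of $g(\eps')-x$ is undetermined — this is exactly the weak-versus-strong distinction the paper is built around, and Section 4 exhibits a weak equilibrium that fails to be strong for this very reason. You need the coefficient to be \emph{strictly} negative, i.e.\ the strict version of the inequality in Proposition \ref{p1} (Corollary \ref{rmk2}). Second, your route to that strict inequality through the gap $x>J(x,S_\infty\setminus\{x\})$ does not close: the first-jump decomposition bounds $J(x,S_\infty\setminus\{x\})$ from below by $\E_x[\delta(T_x)]\bigl(\sum_{y\in S_\infty\setminus\{x\}}\tfrac{q_{xy}}{\lambda_x}y+\sum_{y\notin S_\infty}\tfrac{q_{xy}}{\lambda_x}J(y,S_\infty\setminus\{x\})\bigr)$, whereas the coefficient you must control involves $J(y,S_\infty)$, and removing $x$ from the stopping set can \emph{decrease} $J(y,\cdot)$ at states $y\notin S_\infty$. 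The paper bridges this with Lemma \ref{L11}, which bounds $J(y,\hat S)-J(y,S)$ from below by $\E_y[\delta(\tau_S)\1_{\{X_{\tau_S}=x\}}](J(x,\hat S)-x)$, leading to a self-referential inequality in $x-J(x,\hat S)$ whose coefficient is shown to be positive; it also needs the strict bound $\E_x[\delta(T_x)]>\lambda_x/(\lambda_x-\delta'(0))$, which comes from $\delta$ being non-exponential, and the positivity facts of Lemma \ref{L12}. None of these ingredients appears in your sketch, so as written the strong-equilibrium part is not established.
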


 \subsection{Proof of Theorem \ref{Connection}}
 
Recall $\tau_S,\tau_S^\eps,J(\cdot,\cdot)$ defined in \eqref{e4},\eqref{e6},\eqref{e5} respectively. We have the following  characterization of \eqref{ii} in Definition \ref{Def_mild}.
\begin{prop}\label{p1}
Let Assumptions \ref{A1} and \ref{A2} (ii) hold. Then $S \subset \mathbb{X}$ is a weak equilibrium if and only if $S$ is a mild equilibrium and for all $x\in S$,\[
x(\lambda_x - \delta'(0)) \ge \sum_{y\in S\backslash \{x\}}y q_{xy} + \sum_{y \in S^c} J(y,S)q_{xy}.
\]
\end{prop}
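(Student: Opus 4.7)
The plan is to show that for $x\in S$, the limit $\lim_{\varepsilon\searrow 0}\varepsilon^{-1}(x-\E_x[\delta(\tau_S^\eps)X_{\tau_S^\eps}])$ actually exists and equals $x(\lambda_x-\delta'(0))-\sum_{y\in S\setminus\{x\}}yq_{xy}-\sum_{y\in S^c}J(y,S)q_{xy}$. Since the condition $x\le \E_x[\delta(\tau_S)X_{\tau_S}]$ for $x\notin S$ is simply the mild equilibrium condition, once the limit is identified, the characterization of the $\liminf\ge 0$ condition as the stated inequality yields both directions of the equivalence simultaneously.

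The first step is a short-time expansion based on the first jump time $T_1$, which is $\mathrm{Exp}(\lambda_x)$ under $\PP_x$. On $\{T_1>\varepsilon\}$, with probability $e^{-\lambda_x\varepsilon}=1-\lambda_x\varepsilon+o(\varepsilon)$, the chain has not left $x\in S$, so $\tau_S^\eps=\varepsilon$ and $X_{\tau_S^\eps}=x$. On $\{T_1\le \varepsilon\}$, standard CTMC estimates (using $\lambda<\infty$ from Assumption \ref{A1} (i)) show that the probability of two or more jumps in $[0,\varepsilon]$ is $O(\varepsilon^2)$, so up to $o(\varepsilon)$ we only need to consider exactly one jump to some $y\neq x$, which happens with probability $q_{xy}\varepsilon+o(\varepsilon)$.

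The second step uses the strong Markov property at $T_1$ to evaluate the contribution of the one-jump event. Writing $\tau_S^\eps=T_1+\tilde\tau_S^{\varepsilon-T_1}$ with the shifted chain $\tilde X$ starting at $y$: if $y\in S$, the shifted chain remains at $y$ through time $\varepsilon-T_1$ with probability $e^{-\lambda_y(\varepsilon-T_1)}=1+O(\varepsilon)$, giving $\tau_S^\eps=\varepsilon$ and $X_{\tau_S^\eps}=y$, hence a contribution of $\delta(\varepsilon)y\cdot q_{xy}\varepsilon+o(\varepsilon)$; if $y\in S^c$, the same argument gives $\tau_S^\eps\approx \varepsilon+\tau_S^{(y)}$ where $\tau_S^{(y)}$ is the hitting time of $S$ from $y$, so $\E_y[\delta(\varepsilon+\tau_S^{(y)})X_{\tau_S^{(y)}}]=J(y,S)+O(\varepsilon)$, giving a contribution of $J(y,S)\cdot q_{xy}\varepsilon+o(\varepsilon)$. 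Combining these with the Taylor expansion $\delta(\varepsilon)=1+\delta'(0)\varepsilon+o(\varepsilon)$ from Assumption \ref{A2} (ii) yields
\[
\E_x[\delta(\tau_S^\eps)X_{\tau_S^\eps}] = x+\varepsilon\Bigl[(\delta'(0)-\lambda_x)x+\sum_{y\in S\setminus\{x\}}y\,q_{xy}+\sum_{y\in S^c}J(y,S)\,q_{xy}\Bigr]+o(\varepsilon),
\]
from which the claimed characterization follows immediately upon dividing by $\varepsilon$ and letting $\varepsilon\searrow 0$.

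The main obstacle is controlling the $o(\varepsilon)$ terms uniformly, especially to ensure that the infinite sum $\sum_{y\in\mathbb{X}\setminus\{x\}}q_{xy}\cdot(\text{error from state }y)$ is still $o(1)$. The key inputs are: (a) $C=\sup\mathbb{X}<\infty$ and $\delta\le 1$, which give $|\E_y[\delta(\varepsilon+\tau_S^{(y)})X_{\tau_S^{(y)}}]-J(y,S)|\le C$ uniformly in $y$ and $|J(y,S)|\le C$, so each individual contribution is absolutely bounded; (b) $\sum_{y\neq x}q_{xy}=\lambda_x\le\lambda<\infty$, making the sum convergent; and (c) the uniform bound $|\delta(\varepsilon+t)-\delta(t)|\to 0$ as $\varepsilon\searrow 0$ combined with dominated convergence on the counting measure weighted by $\{q_{xy}\}$. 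Once these routine but nontrivial uniformity estimates are established, the rest of the proof reduces to the two-line algebraic manipulation above.
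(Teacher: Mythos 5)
Your proposal is correct and follows essentially the same route as the paper: a first-jump decomposition of $\E_x[\delta(\tau_S^\eps)X_{\tau_S^\eps}]$ at the holding time, with the two-or-more-jumps event absorbed into $O(\eps^2)$, followed by the Taylor expansion $\delta(\eps)=1+\delta'(0)\eps+o(\eps)$ to identify the first-order coefficient. Your additional attention to the uniformity of the error terms over a countably infinite state space (via $C<\infty$ and $\lambda<\infty$) is a welcome refinement of a step the paper treats implicitly, but it does not change the argument.
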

 
 \begin{proof} By definition, we only need to check condition (\ref{ii}) in Definition \ref{Def_mild} is equivalent to the above inequality.
 
Denote $T_x := \inf\{t\ge 0: X_t \ne x, X_0 = x\}$ as the holding time at state $x$, which has exponential distribution with parameter $\lambda_x$.
Then \begin{align*}
\E_x[\delta(\tau_S^{\varepsilon}) X_{\tau_S^{\varepsilon}}]=\,\, &\E_x[\delta(\tau_S^{\varepsilon})X_{\tau_S^{\varepsilon}}\textbf{1}_{\{T_x > \varepsilon\}}] + \sum_{y \in  \mathbb{X}\backslash\{x\}}\E_x[\delta(\tau_S^{\varepsilon}) X_{\tau_S^{\varepsilon}}\textbf{1}_{\{T_x \le \varepsilon,X_{T_x} = y, T_y+T_x > \varepsilon\}}]+ O(\varepsilon^2)\\
 =\,\, &\delta(\varepsilon)x e^{-\lambda_x \varepsilon} + \left\{\sum_{y \in  S\backslash\{x\}}\delta(\varepsilon)y\frac{q_{xy}}{\lambda_x} + \sum_{y \in  S^c}\E_y[\delta(\varepsilon+\tau_S)X_{\tau_S}]\frac{q_{xy}}{\lambda_x}\right\}(\lambda_x \varepsilon +O(\varepsilon^2)) + O(\varepsilon^2). 
 \end{align*}
 
 Notice that { $\delta(\varepsilon) = 1+  \delta'(0)\varepsilon +o(\varepsilon)$}. Therefore we have \[
 \E_x[\delta(\tau_S^{\varepsilon}) X_{\tau_S^{\varepsilon}}] = x+ \left\{-x(\lambda_x -\delta'(0)) +\sum_{y\in S\backslash\{x\}}y q_{xy}+\sum_{y \in S^c}q_{xy}\E_y[\delta(\varepsilon+\tau_S)X_{\tau_S}]\right\} \varepsilon  +o(\varepsilon).
 \]
Therefore, \eqref{ii} is equivalent to
\[
 x(\lambda_x - \delta'(0)) \ge \sum_{y\in S\backslash \{x\}}y q_{xy} + \sum_{y \in S^c} \E_y[\delta(\tau_S)X_{\tau_S}]q_{xy}.
 \]
 \end{proof}

\begin{cor} \label{rmk2}
Let Assumptions \ref{A1} and \ref{A2} (ii) hold. If $S$ is a mild equilibrium and satisfies
\[
 x(\lambda_x - \delta'(0)) > \sum_{y\in S\backslash \{x\}}y q_{xy} + \sum_{y \in S^c} \E_y[\delta(\tau_S)X_{\tau_S}]q_{xy},
 \]
 then it is a strong equilibrium.
\end{cor}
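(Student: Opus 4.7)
The plan is to reuse the first-order expansion derived in the proof of Proposition~\ref{p1} and combine it with the strict inequality assumption, using only that $\delta$ is decreasing and the state space is nonnegative. From that proof,
\[
\E_x[\delta(\tau_S^{\varepsilon})X_{\tau_S^{\varepsilon}}] = x + a(\varepsilon)\varepsilon + o(\varepsilon), \qquad \varepsilon \searrow 0,
\]
where
\[
a(\varepsilon) := -x(\lambda_x - \delta'(0)) + \sum_{y\in S\setminus\{x\}} y\, q_{xy} + \sum_{y\in S^c} q_{xy}\, \E_y[\delta(\varepsilon + \tau_S)X_{\tau_S}].
\]
Writing $\alpha$ for the same expression with $\E_y[\delta(\varepsilon + \tau_S)X_{\tau_S}]$ replaced by $J(y,S)$, the corollary's hypothesis is exactly $\alpha < 0$.

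The key observation is monotonicity: since $\delta$ is decreasing and $X_{\tau_S} \ge 0$ (because $\mathbb{X}\subset[0,\infty)$), for every $\varepsilon>0$ and every $y \in S^c$ we have $\delta(\varepsilon + \tau_S)X_{\tau_S} \le \delta(\tau_S)X_{\tau_S}$ pointwise, so $\E_y[\delta(\varepsilon + \tau_S)X_{\tau_S}] \le J(y,S)$. Therefore $a(\varepsilon) \le \alpha < 0$ for every $\varepsilon > 0$, and substituting back into the expansion yields $\E_x[\delta(\tau_S^{\varepsilon})X_{\tau_S^{\varepsilon}}] - x \le \alpha \varepsilon + o(\varepsilon)$. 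Choosing $\varepsilon(x)>0$ small enough that the $o(\varepsilon)$ remainder is dominated in absolute value by $|\alpha|\varepsilon/2$ on $(0,\varepsilon(x))$ gives
\[
\E_x[\delta(\tau_S^{\varepsilon'})X_{\tau_S^{\varepsilon'}}] \le x + \tfrac{\alpha}{2}\varepsilon' < x, \qquad \forall\, \varepsilon' \in (0, \varepsilon(x)),
\]
which, combined with the mild equilibrium condition $x \le \E_x[\delta(\tau_S)X_{\tau_S}]$ for $x\notin S$, verifies Definition~\ref{Def_Strong}.

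The one step I expect to require a moment's thought is the monotonicity bound $a(\varepsilon) \le \alpha$: it replaces an $\varepsilon$-dependent coefficient by its $\varepsilon$-free upper bound via the pointwise inequality $\delta(\varepsilon + \cdot) \le \delta(\cdot)$, and this neatly bypasses any need for dominated convergence across the (possibly countably infinite) sum over $S^c$ — which would otherwise be the only real obstacle and would have to be handled via the uniform bounds $\sup_x \lambda_x < \infty$ and $\sup \mathbb{X} < \infty$ from Assumption~\ref{A1}(i). Everything else is a direct exploitation of the expansion inherited from Proposition~\ref{p1}.
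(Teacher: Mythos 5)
Your proposal is correct and follows essentially the same route as the paper: the corollary is an immediate consequence of the expansion $\E_x[\delta(\tau_S^{\varepsilon})X_{\tau_S^{\varepsilon}}] = x + a(\varepsilon)\varepsilon + o(\varepsilon)$ established in the proof of Proposition~\ref{p1}, with strict negativity of the coefficient forcing $\E_x[\delta(\tau_S^{\varepsilon'})X_{\tau_S^{\varepsilon'}}] < x$ for all sufficiently small $\varepsilon'$. Your monotonicity bound $a(\varepsilon)\le\alpha$ (from $\delta$ decreasing and $\mathbb{X}\subset[0,\infty)$) is a clean way to make the passage from the $\varepsilon$-dependent coefficient to the hypothesis rigorous, but it does not change the substance of the argument.
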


For the rest of the paper, we will sometimes use the notation
$$\rho(x,S):=\inf\{t\geq 0:\ X^x_t\in S\}$$
in the place of $\tau_S$ to emphasize the initial state $X_0=x$ ($X^x$ here is the Markov chain starting at $x$).

\begin{lemma} \label{L11}
Let Assumption \ref{A2} (i) hold. For $x\in S$, denote $\hat S = S\backslash \{x\}$. If $S$ is an optimal mild equilibrium, then for any $y \notin S$, \[
 J(y, \hat S) - J(y, S)\ge \E_y[\delta(\tau_{S})\textbf{1}_{\{X_{\tau_S} = x\}}](J(x,\hat S) -x).
 \]
 \end{lemma}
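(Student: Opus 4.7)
My plan is to compare the two expected payoffs by splitting the underlying probability space according to where the process first enters $S$, and exploit the fact that $\hat{S}\subset S$ differ only at the single state $x$. Since $\hat S=S\setminus\{x\}$ and $y\notin S$, we have $\tau_{\hat S}\ge \tau_S$, with equality on $\{X_{\tau_S}\ne x\}$ (because then $X_{\tau_S}\in \hat S$), and strict inequality on $\{X_{\tau_S}=x\}$. On the latter event the strong Markov property tells us that, conditional on $\mathcal F_{\tau_S}$, the additional time $\tau_{\hat S}-\tau_S$ is distributed as $\rho(x,\hat S)$ starting from $x$, and $X_{\tau_{\hat S}}$ is the corresponding terminal state.

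The second step, and the single place where a nontrivial inequality enters, is to invoke log-subadditivity of $\delta$ (Assumption \ref{A2}(i)): on $\{X_{\tau_S}=x\}$,
\[
\delta(\tau_{\hat S})=\delta\bigl(\tau_S+(\tau_{\hat S}-\tau_S)\bigr)\ \ge\ \delta(\tau_S)\,\delta(\tau_{\hat S}-\tau_S).
\]
Applying the tower property and the strong Markov property then yields
\[
\E_y\bigl[\delta(\tau_{\hat S})X_{\tau_{\hat S}}\mathbf 1_{\{X_{\tau_S}=x\}}\bigr]\ \ge\ \E_y\bigl[\delta(\tau_S)\mathbf 1_{\{X_{\tau_S}=x\}}\bigr]\cdot J(x,\hat S).
\]

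With that in hand the rest is just bookkeeping. I would write
\[
J(y,\hat S)=\E_y\bigl[\delta(\tau_S)X_{\tau_S}\mathbf 1_{\{X_{\tau_S}\ne x\}}\bigr]+\E_y\bigl[\delta(\tau_{\hat S})X_{\tau_{\hat S}}\mathbf 1_{\{X_{\tau_S}=x\}}\bigr]
\]
and
\[
J(y,S)=\E_y\bigl[\delta(\tau_S)X_{\tau_S}\mathbf 1_{\{X_{\tau_S}\ne x\}}\bigr]+x\,\E_y\bigl[\delta(\tau_S)\mathbf 1_{\{X_{\tau_S}=x\}}\bigr],
\]
so that the first terms cancel upon subtraction and the inequality on the second term produces exactly the stated bound $\E_y[\delta(\tau_S)\mathbf 1_{\{X_{\tau_S}=x\}}](J(x,\hat S)-x)$.

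The main obstacle, as far as I can see, is really just lining up the decomposition so that the strong Markov property cleanly separates the path before $\tau_S$ from the hitting of $\hat S$ after $\tau_S$ on the event $\{X_{\tau_S}=x\}$; once this is done, log-subadditivity supplies the inequality and the rest is algebra. It is worth noting that the optimal mild equilibrium hypothesis on $S$ is not actually used in this argument—only log-subadditivity and the Markov structure enter—so its relevance must lie in how this lemma is applied downstream (presumably in the proof of Theorem \ref{Connection}).
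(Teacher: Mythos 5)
Your proof is correct and follows essentially the same route as the paper's: decompose over $\{X_{\tau_S}=x\}$ versus $\{X_{\tau_S}\in\hat S\}$, note the contributions on the latter event cancel, and on the former apply log-subadditivity of $\delta$ together with the strong Markov property to bound the $\hat S$-term from below by $\E_y[\delta(\tau_S)\mathbf{1}_{\{X_{\tau_S}=x\}}]\,J(x,\hat S)$. Your observation that the optimal-mild-equilibrium hypothesis is not actually used is also accurate; the paper's proof does not use it either.
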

 
 \begin{proof} Since $\hat S \subset S$, we have $\rho(y, S) \le \rho(y, \hat S)$. Then
 \begin{align*}
 &J(y, \hat S) - J(y, S)\\
 &=\E_y[\delta(\rho(y,\hat S))X_{\rho(y,\hat S)}\textbf{1}_{\{X_{\rho(y,S)} = x\}}] + \E_y[\delta(\rho(y,\hat S))X_{\rho(y,\hat S)}\textbf{1}_{\{X_{\rho(y,S)} \in \hat S\}}] -  \E_y[\delta(\rho(y,S))X_{\rho(y, S)}] \\
&=\E_y[\delta(\rho(y,\hat S))X_{\rho(y,\hat S)}\textbf{1}_{\{X_{\rho(y,S)} = x\}}] + \E_y[\delta(\rho(y,S))X_{\rho(y,S)}\textbf{1}_{\{X_{\rho(y,S)} \in \hat S\}}] -  \E_y[\delta(\rho(y,S))X_{\rho(y, S)}] \\
&= \E_y[\delta(\rho(y,\hat S))X_{\rho(y,\hat S)}\textbf{1}_{\{X_{\rho(y,S)} = x\}}]  -  x\E_y[\delta(\rho(y,S))\textbf{1}_{\{X_{\rho(y,S)} = x\}}]  \\
&\geq \E_y[\delta(\rho(y,S))\textbf{1}_{\{X_{\rho(y,S)} = x\}}\E[\delta(\rho(y,\hat S)-\rho(y,S))X_{\rho(x,\hat S)}|\mathcal{F}_{\rho(y,S)}]] -  x\E_y[\delta(\rho(y,S))\textbf{1}_{\{X_{\rho(y,S)} = x\}}]  \\
&= \E_y[\delta(\tau_{S})\textbf{1}_{\{X_{\tau_S} = x\}}](\E_x[\delta(\tau_{\hat S})X_{\tau_{\hat S}}] -x),
  \end{align*}
 where we use \eqref{ee3} for the inequality above. 
 \end{proof}

\begin{lemma}\label{L12}
Let Assumption \ref{A2} (i) hold. If $S$ is an optimal mild equilibrium, then for any $x \in S$ we have that
\[
 x\ge J(x, \hat S), \quad \text{where} \quad \hat S = S\backslash\{x\}.
 \]
 As a result, $0 \notin S$ and $J(y, S)> 0$ for all $y \in \mathbb{X}$.
 \end{lemma}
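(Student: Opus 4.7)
The plan is to prove the main inequality by contradiction, leveraging the optimality of $S$ together with Lemma \ref{L11}. Suppose there exists $x\in S$ with $x<J(x,\hat S)$. I would show that $\hat S=S\setminus\{x\}$ is itself a mild equilibrium; then since $S$ is optimal we would need $J(x,S)\ge J(x,\hat S)$, but $x\in S$ forces $\tau_S=0$ and hence $J(x,S)=x<J(x,\hat S)$, a contradiction.

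To verify $\hat S$ is a mild equilibrium I need to check $y\le J(y,\hat S)$ for every $y\notin\hat S$. There are two cases. For $y=x$ the condition is exactly the standing contradiction hypothesis $x<J(x,\hat S)$. For $y\notin S$, I invoke Lemma \ref{L11}:
\[
J(y,\hat S)-J(y,S)\;\ge\;\E_y\!\left[\delta(\tau_S)\mathbf{1}_{\{X_{\tau_S}=x\}}\right]\bigl(J(x,\hat S)-x\bigr).
\]
The right-hand side is nonnegative under the contradiction hypothesis, so $J(y,\hat S)\ge J(y,S)\ge y$, the last inequality holding because $S$ is a mild equilibrium. This is the only place where Assumption \ref{A2}(i) enters (through Lemma \ref{L11}), and I expect this verification to be the main obstacle — the subtlety is that removing a point from $S$ could in principle destroy the equilibrium property at other continuation states, and Lemma \ref{L11} is exactly the tool that rules that out under log-subadditivity.

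For the ``as a result'' claims, I first argue $0\notin S$. If $0\in S$, applying the just-proved inequality with $x=0$ gives $0\ge J(0,\hat S)$. If $\hat S\neq\emptyset$, then since $\hat S\subset\mathbb{X}\subset[0,\infty)$ with $0\notin\hat S$, every state in $\hat S$ is strictly positive; by Assumption \ref{A1}(ii) (irreducibility), $\tau_{\hat S}<\infty$ a.s.\ starting from $0$, and since $\delta>0$ on $[0,\infty)$, this forces $J(0,\hat S)>0$, a contradiction. If instead $\hat S=\emptyset$, i.e.\ $S=\{0\}$, then $J(y,S)=0$ for all $y$, and the mild equilibrium condition $y\le J(y,S)$ fails at any $y>0$ in $\mathbb{X}$ (which exists, otherwise the problem is trivial).

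Finally, for $J(y,S)>0$ for all $y\in\mathbb{X}$: if $y\in S$ then $J(y,S)=y>0$ since $0\notin S$; if $y\notin S$, irreducibility gives $\tau_S<\infty$ a.s., and all states in $S$ being strictly positive together with $\delta>0$ yields $J(y,S)=\E_y[\delta(\tau_S)X_{\tau_S}]>0$.
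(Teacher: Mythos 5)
Your argument is correct and is essentially the paper's proof run in contrapositive form: the paper splits on whether $\hat S$ is a mild equilibrium (using optimality of $S$ in one case and Lemma \ref{L11} in the other), while you assume $x<J(x,\hat S)$, use Lemma \ref{L11} to conclude $\hat S$ is then itself a mild equilibrium, and contradict optimality --- the same two ingredients playing the same roles. Your handling of the degenerate case $S=\{0\}$ (where $\hat S=\emptyset$) in the ``as a result'' part is in fact slightly more careful than the paper's, which tacitly assumes $\hat S\neq\emptyset$ there.
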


 \begin{proof}
 
 If $\hat S$ is also a mild equilibrium, then
 $$x \leq J(x, \hat S)\leq J(x,S)=x,$$
 and thus $x=J(x,\hat S)$.
 
 If $\hat S$ is not a mild equilibrium, then there exists $y \notin \hat S$ such that $J(y, \hat S) < y \le J(y, S)$. By Lemma \ref{L11}, \[
 0 > J(y, \hat S) - J(y, S)\ge  \E_y[\delta(\tau_{S})\mathbb{I}_{\{X_{\tau_S} = x\}}](J(x,\hat S) -x),
 \]
which implies that
\begin{equation}\label{e11}
x>J(x,\hat S).
\end{equation}
 
Now suppose $0 \in S$. By the above result, we have $0 \ge J(0, S\backslash\{0\})$. Since $X_{\tau_{S\backslash\{0\}}} >0$, $J(0, S\backslash\{0\})> 0$, which is a contraction. As a result, $0 \notin S$ and $J(y, S)>0$ for all $y \in \mathbb{X}$.
 \end{proof}

 \begin{proof}[\textbf{Proof of Theorem \ref{Connection}}]
  By Assumption \ref{A2}, $\delta(t) \ge e^{\delta'(0)t}$ for all $t\ge 0$. Moreover, there exist $t_0>0$ such that for $t> t_0, \,\, \delta(t) > e^{\delta'(0)t}$ since $\delta$ is non-exponential. As a result, for any $x\in \mathbb{X}$,
\[
 \E_x[\delta(T_x)] = \int_0^{\infty}  { \lambda_x}  \delta(t)e^{-\lambda_x t} dt > \int_0^{\infty} \lambda_x e^{(\delta'(0)-\lambda_x) t} dt = \frac{\lambda_x}{\lambda_x - \delta'(0)}.
 \]
Denote $c_x : = \frac{\lambda_x}{\lambda_x - \delta'(0)}$.

If $S=\{x\}$, then as $x\neq 0$ by Lemma \ref{L12} we have that
$$\sum_{y\neq x}J(y,S)q_{xy}\leq x\sum_{y\neq x}\E_y[\delta(T_y)]q_{xy}<x\lambda_x<x(\lambda_x-\delta'(0)),$$
which implies that $S$ is a strong equilibrium.

For the rest of the proof, we assume $S$ contains at least two points. Fix any $x \in S$, we have
$$J(x, \hat S)=\sum_{y \in S\backslash \{x\}}\frac{q_{xy}}{\lambda_x}\E_x[\delta(\tau_{\hat S}) X_{\tau_{\hat S}}|X_{T_x} = y]+\sum_{y \notin S}\frac{q_{xy}}{\lambda_x} \E_x[\delta(\tau_{\hat S}) X_{\tau_{\hat S}}|X_{T_x} = y].$$
Since for $y\in S\setminus\{x\}$,
$$\E_x[\delta(\tau_{\hat S}) X_{\tau_{\hat S}}|X_{T_x} = y]=y\E_x[\delta(\tau_{\hat S})|X_{T_x} = y]=y\E_x[\delta(T_x)|X_{T_x} = y]=y\E_x[\delta(T_x)],$$
and for $y\in S^c$,
\begin{align*}
\E_x[\delta(\tau_{\hat S}) X_{\tau_{\hat S}}|X_{T_x} = y]&\geq \E_x[\delta(T_x)\delta(\tau_{\hat S}-T_x) X_{\tau_{\hat S}}|X_{T_x} = y]\\
&=\E_x[\delta(T_x)|X_{T_x} = y]\cdot\E_x[\delta(\tau_{\hat S}-T_x) X_{\tau_{\hat S}}|X_{T_x} = y]=\E_x[\delta(T_x)]\cdot J(y,\hat S),
\end{align*}
we have that
\begin{equation}\label{e9}
J(x,\hat S)\geq\left(\sum_{y \in S\backslash \{x\}}\frac{q_{xy}}{\lambda_x}  y+\sum_{y \notin S}\frac{q_{xy}}{\lambda_x} J(y, \hat S)\right)\cdot\E_x[\delta(T_x)].
\end{equation}
Denote
\[
 \text{I} : =   \sum_{y \in S\backslash \{x\}} \frac{q_{xy}}{\lambda_x}y, \quad  \text{II}  : = \sum_{y \notin S}\frac{q_{xy}}{\lambda_x} J(y, S), \quad \hat{ \text{II} } : = \sum_{y \notin S}\frac{q_{xy}}{\lambda_x}J(y, \hat S).
 \]
 By Lemma \ref{L12}, $y> 0$ for all $y \in \hat S$ and $J(y, \hat S)> 0$ for all $y \notin \hat S$, thus I + $\hat{\text{II}} > 0$. This together with $\E_x[\delta(T_x)] > c_x$ implies that
 \[
 J(x, \hat S)> ( \text{I } + \hat{ \text{II} })c_x.
 \]
 Then
 \begin{align*}
 x - J(x, \hat S) & < x - ( \text{I } + \hat{ \text{II} })c_x \\
 & = x - (\text{I } +   \text{II})c_x + ( \text{II} - \hat{  \text{II}})c_x \\
 & = x -  ( \text{I } +  \text{II})c_x + c_x  \sum_{y \notin S}\frac{q_{xy}}{\lambda_x}(J(y, S) - J(y, \hat S))\\
 & \le  x -  ( \text{I }+   \text{II})c_x + c_x  \sum_{y \notin S}\frac{q_{xy}}{\lambda_x}(\E_y[\delta(\tau_{S})\textbf{1}_{\{X_{\tau_S} = x\}}](x -J(x, \hat S)),
 \end{align*}
where the last line follows from Lemma \ref{L11}.
Thus 
\begin{equation}\label{e10}
\left(1 - c_x  \sum_{y \notin S}\frac{q_{xy}}{\lambda_x}(\E_y[\delta(\tau_{S})\textbf{1}_{\{X_{\tau_S} = x\}}])\right)(x -J(x, \hat S)) <  x -  ( \text{I}+  \text{II})c_x. 
\end{equation}
Notice that 
\[
c_x  \sum_{y \notin S}\frac{q_{xy}}{\lambda_x}(\E_y[\delta(\tau_{S})\textbf{1}_{\{X_{\tau_S} = x\}}] \le c_x  \sum_{y \notin S}\frac{q_{xy}}{\lambda_x} \le c_x < 1.
\]
Then by Lemma \ref{L12},
$$x -  ( \text{I}+  \text{II})c_x > 0,\quad\forall\,x\in\mathbb{X},$$
which implies $S$ is a strong equilibrium.
 \end{proof}

\subsection{Proof of Theorem \ref{Existence}}

We start with the following lemma, which in particular indicates that a smaller mild equilibrium generates larger values.

\begin{lemma} Let Assumption \ref{A2} (i) hold. If $S$ is a mild equilibrium, then for any subset $R \subset \mathbb{X}$ with $S \subset R$, we have\[
J(x, S) \ge J(x, R), \quad \forall x \in \mathbb{X}.
\]
\end{lemma}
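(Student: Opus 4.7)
The plan is to reduce everything to three cases according to where $x$ lies (in $S$, in $R\setminus S$, or outside $R$) and treat the nontrivial case by a strong Markov / log-subadditivity argument at time $\tau_R$.

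First I would dispose of the easy cases. If $x\in S$, then $\tau_S=\tau_R=0$, so $J(x,S)=J(x,R)=x$ and the inequality is trivial. If $x\in R\setminus S$, then $\tau_R=0$ forces $J(x,R)=x$, while $x\notin S$ combined with the mild-equilibrium condition \eqref{e5} gives $J(x,S)\ge x=J(x,R)$.

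For the remaining case $x\notin R$, the key observation is that $S\subset R$ implies $\tau_R\le \tau_S$, and on the event $\{X_{\tau_R}\in S\}$ we in fact have $\tau_R=\tau_S$. Applying Assumption \ref{A2} (i) with $s=\tau_R$ and $t=\tau_S-\tau_R$ (both nonnegative) yields
\[
\delta(\tau_S)\ \ge\ \delta(\tau_R)\,\delta(\tau_S-\tau_R).
\]
Multiplying by $X_{\tau_S}\ge 0$, taking $\E_x$, and using the strong Markov property at $\tau_R$ together with the notation $\rho(\cdot,S)$ from the paper, I would obtain
\[
J(x,S)\ =\ \E_x[\delta(\tau_S)X_{\tau_S}]\ \ge\ \E_x\bigl[\delta(\tau_R)\,\E[\delta(\rho(X_{\tau_R},S))X_{\rho(X_{\tau_R},S)}\mid\mathcal{F}_{\tau_R}]\bigr]\ =\ \E_x[\delta(\tau_R)\,J(X_{\tau_R},S)].
\]

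To finish, I would split the last expectation over $\{X_{\tau_R}\in S\}$ and $\{X_{\tau_R}\in R\setminus S\}$. On the former, $J(X_{\tau_R},S)=X_{\tau_R}$; on the latter, the mild-equilibrium inequality \eqref{e5} applied at the point $X_{\tau_R}\notin S$ gives $J(X_{\tau_R},S)\ge X_{\tau_R}$. Combining these,
\[
J(x,S)\ \ge\ \E_x[\delta(\tau_R)X_{\tau_R}]\ =\ J(x,R),
\]
which is what we need. The only subtlety, and what I expect to be the main technical care-point, is the joint use of log-subadditivity with the strong Markov property inside the expectation — in particular making sure the conditional expectation is rewritten as $J(X_{\tau_R},S)$ before applying \eqref{e5}, and that the mild-equilibrium inequality is invoked pointwise only on the event where $X_{\tau_R}\notin S$.
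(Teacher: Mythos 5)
Your proposal is correct and follows essentially the same route as the paper: the tower property at $\tau_R$, log-subadditivity of $\delta$, the strong Markov property to identify the conditional expectation with $J(X_{\tau_R},S)$, and then the mild-equilibrium inequality (together with $J(y,S)=y$ for $y\in S$) to bound it below by $X_{\tau_R}$. The explicit three-way case split on the location of $x$ is harmless extra bookkeeping; the paper handles all cases in a single chain of (in)equalities.
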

\begin{proof}
Since $S\subset R$, $\rho(x, S)\ge \rho(x, R)$ for all $x\in \mathbb{X}$.
\begin{align*}
J(x, S)& = \E_x[\delta(\rho(x, S)) X_{\rho(x, S)}]\\
& = \E_x[\E_x[\delta(\rho(x, S)) X_{\rho(x, S)} |\mathcal{F}_{\rho(x, R)} ]]\\
& \ge \E_x[\delta(\rho(x, R))\E_x[\delta(\rho(x, S)- \rho(x, R)) X_{\rho(x, S)} |\mathcal{F}_{\rho(x, R)} ]]\\
& = \E_x[\delta(\rho(x, R))\E_{X_{\rho(x, R)}}[\delta(\rho(X_{\rho(x, R)}, S)) X_{\rho(x, S)} ]]\\
& \ge \E_x[\delta(\rho(x, R))X_{\rho(x, R)}] = J(x, R).
\end{align*}
The last inequality holds because $S$ is a mild equilibrium and by definition, \[\E_{X_{\rho(x, R)}}[\delta(\rho(X_{\rho(x, R)}, S)) X_{\rho(x, S)} ] \ge X_{\rho(x, R)}.
\]
\end{proof}

\begin{cor}\label{c1}
Let Assumption \ref{A2} (i) hold. If $S$ is the smallest mild equilibrium, i.e. $S\subset \widetilde S$ for any mild equilibrium $\widetilde S$, then $S$ is an optimal mild equilibrium.
\end{cor}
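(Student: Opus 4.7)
The corollary is an essentially immediate consequence of the preceding lemma, so the plan is simply to make the dependency explicit. First, I would note that $S$ being the smallest mild equilibrium is itself a mild equilibrium by hypothesis, so the premise of the lemma applies to it.

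Next, take any mild equilibrium $\widetilde S$. Because $S$ is assumed to be contained in every mild equilibrium, we have $S\subset\widetilde S$. I can then apply the preceding lemma with the role of $R$ played by $\widetilde S$: since $S$ is a mild equilibrium and $S\subset\widetilde S$, the lemma yields
\[
J(x,S)\ge J(x,\widetilde S),\quad\forall\,x\in\mathbb{X}.
\]
By the definition of optimal mild equilibrium (and the equivalent point-wise dominance characterization in terms of $J$ stated there), this is precisely the required optimality condition.

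There is no genuine obstacle here; the only point worth spelling out for the reader is why the equivalence in the definition of optimal mild equilibrium is consistent with comparing $J(\cdot,S)$ and $J(\cdot,\widetilde S)$ directly, which follows from the fact that for any mild equilibrium $R$ one has $J(x,R)\ge x$ for $x\notin R$ while $J(x,R)=x$ for $x\in R$, so $x\vee J(x,R)=J(x,R)$ for all $x\in\mathbb{X}$. Thus the whole proof is a two-line application of the lemma, and I would present it as such.
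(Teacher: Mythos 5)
Your proposal is correct and matches the paper's intent exactly: the corollary is stated as an immediate consequence of the preceding lemma, applied with $R=\widetilde S$ for an arbitrary mild equilibrium $\widetilde S$, and the paper gives no further argument. Your extra remark explaining why $x\vee J(x,R)=J(x,R)$ for any mild equilibrium $R$ is a harmless and accurate clarification of the equivalence already asserted in the definition of optimal mild equilibrium.
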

Thanks to this corollary, in order to show $S_\infty$ defined in \eqref{e7} is an optimal mild equilibrium, it suffices to show that $S_\infty$ is the smallest one.

Recall $S_n$ defined in \eqref{e8}. We have the following lemma.

\begin{lemma}\label{L21}
For any mild equilibrium $R$, we have that $S_n \subset R$ for all $n \in \mathbb{N}$.
\end{lemma}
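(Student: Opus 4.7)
My plan is to prove the statement by induction on $n$, with the base case being trivial since $S_0=\emptyset\subset R$ for any $R$. The substance of the argument is in the inductive step, which I will carry out by a direct contradiction using the definition of mild equilibrium for $R$.

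Assume $S_n\subset R$. I want to show $S_{n+1}\subset R$. Since $S_n\subset R$ already holds, it suffices to take an arbitrary $x\in S_{n+1}\setminus S_n$ and show $x\in R$. By the construction of $S_{n+1}$ in \eqref{e8}, such an $x$ satisfies
\[
x>\sup_{S:\, S_n\subset S\subset \mathbb{X}\setminus\{x\}} J(x,S).
\]
Suppose, for contradiction, that $x\notin R$. Then $R\subset \mathbb{X}\setminus\{x\}$, and combined with the inductive hypothesis $S_n\subset R$, the set $R$ lies in the admissible family over which the supremum above is taken. Hence
\[
x>J(x,R).
\]
On the other hand, since $R$ is a mild equilibrium and $x\notin R$, Definition \ref{Def_mild} gives $x\le J(x,R)$, a contradiction. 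Therefore $x\in R$, which closes the induction.

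The argument is short, and I do not expect a significant obstacle: the definition of $S_{n+1}$ is tailored precisely so that the candidate set $R$ can be plugged into the supremum, at which point the defining inequality of a mild equilibrium delivers the contradiction. The only point that requires care is ensuring that $R$ is genuinely admissible in the supremum, i.e.\ that both $S_n\subset R$ (via the induction hypothesis) and $x\notin R$ (via the assumption for contradiction) hold simultaneously; once this is noted, the conclusion is immediate.
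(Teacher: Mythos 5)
Your proof is correct and is essentially the paper's own argument: both proceed by induction, using the inductive hypothesis $S_n\subset R$ together with $x\notin R$ to place $R$ in the admissible family of the supremum in \eqref{e8}, and then invoking the mild-equilibrium inequality $x\le J(x,R)$. The only difference is presentational — you argue by contradiction from $x\in S_{n+1}$, while the paper shows directly that $x\notin R$ implies $x\notin S_{n+1}$ — which is the same logical step in contrapositive form.
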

\begin{proof}
We prove this lemma by induction. First $S_0 \subset R$. Suppose $S_n\subset R$ for $n \ge 0$. Since $R$ is a mild equilibrium, for any $x \notin R$, \[
x \le J(x, R) \le  \sup_{S: S_n \subset S \subset \mathbb{X}\backslash \{x\}} J(x, S).
\]
Therefore $x\notin S_{n+1}$. As a result, $S_{n+1} \subset R$ for all $n \in  \mathbb{N}$.
\end{proof}
\begin{lemma}\label{L22}
Let  Assumption \ref{A1} (i) hold. For $y \notin S_{\infty}$, denote \[V_n := \sup_{S: S_n \subset S \subset \mathbb{X}\backslash \{y\}} J(y, S), \quad \quad V_{\infty} := \sup_{S: S_{\infty} \subset S \subset \mathbb{X}\backslash \{y\}} J(y, S),\] then we have $V_n \searrow V_{\infty}, n \to \infty$.
\end{lemma}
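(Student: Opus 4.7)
The plan is to prove the two directions $V_n \geq V_\infty$ and $\lim_{n\to\infty} V_n \leq V_\infty$ separately, together with the monotonicity of the sequence $(V_n)$. The first direction and monotonicity are essentially free: since $S_n \subset S_{n+1} \subset S_\infty$, every set $S$ admissible in the supremum defining $V_{n+1}$ (respectively $V_\infty$) is also admissible for $V_n$, so $V_n \geq V_{n+1} \geq V_\infty$. Combined with Assumption \ref{A1}~(i), which gives $J(y,\cdot)\leq C$ and hence $V_n\leq C$, this shows $\lim_{n\to\infty} V_n$ exists and is at least $V_\infty$.

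For the reverse inequality, fix $\varepsilon>0$ and for each $n$ choose $S^{(n)}$ with $S_n \subset S^{(n)} \subset \mathbb{X}\setminus\{y\}$ and $J(y,S^{(n)}) > V_n - \varepsilon$. Define the enlarged set $T^{(n)} := S^{(n)}\cup S_\infty$, which satisfies $S_\infty \subset T^{(n)} \subset \mathbb{X}\setminus\{y\}$ (using $y\notin S_\infty$), and therefore $J(y,T^{(n)}) \leq V_\infty$. The whole argument then reduces to showing that $J(y,S^{(n)}) - J(y,T^{(n)}) \to 0$.

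To compare these two values, I would exploit the identity $\rho(y,T^{(n)}) = \rho(y,S^{(n)}) \wedge \rho(y,S_\infty)$. On the event $A_n := \{X_{\rho(y,T^{(n)})} \in S^{(n)}\}$ the two hitting times coincide and the integrands defining $J(y,S^{(n)})$ and $J(y,T^{(n)})$ agree, so the discrepancy is supported on $A_n^c$. On $A_n^c$, necessarily $\rho(y,T^{(n)}) = \rho(y,S_\infty) < \rho(y,S^{(n)})$ and $X_{\rho(y,S_\infty)} \in T^{(n)}\setminus S^{(n)} \subset S_\infty\setminus S_n$, so $A_n^c \subset D_n := \{\rho(y,S_\infty)<\infty,\ X_{\rho(y,S_\infty)} \in S_\infty\setminus S_n\}$. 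The key observation is that $D_n$ decreases to $\emptyset$: since $S_\infty = \bigcup_n S_n$, every realization $z = X_{\rho(y,S_\infty)}(\omega) \in S_\infty$ lies in $S_n$ for all sufficiently large $n$. By downward monotone convergence, $\PP_y(D_n) \to 0$.

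Finally, the uniform bound $\delta(\cdot)X_{\cdot} \in [0,C]$ (from $\delta\leq 1$ and $X\leq C$) gives $|J(y,S^{(n)}) - J(y,T^{(n)})| \leq C\,\PP_y(A_n^c) \leq C\,\PP_y(D_n) \to 0$. Chaining inequalities then yields $V_n - \varepsilon < J(y,S^{(n)}) \leq J(y,T^{(n)}) + o(1) \leq V_\infty + o(1)$, and letting first $n\to\infty$ and then $\varepsilon\downarrow 0$ gives $\lim_n V_n \leq V_\infty$, completing the proof. The main obstacle is the decomposition step leading to $\PP_y(D_n)\to 0$: one has to design $T^{(n)}$ as the union $S^{(n)}\cup S_\infty$ so that the discrepancy is supported on the rare event where $X^y$ enters $S_\infty$ at a point not yet absorbed by $S_n$, and then exploit $S_n\uparrow S_\infty$ to conclude that this event has vanishing probability.
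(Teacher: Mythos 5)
Your proof is correct and follows essentially the same route as the paper's: both enlarge the near-optimal set $S^{(n)}$ to $S^{(n)}\cup S_\infty$, note $J(y,S^{(n)}\cup S_\infty)\le V_\infty$, and observe that the discrepancy between the two values is supported on the event that the chain first enters the enlarged set at a point of $S_\infty\setminus S_n$. The only minor difference is in how that discrepancy is shown to vanish: the paper bounds it by $C\,\E_y[\delta(\tau_{S_\infty\setminus S_n})]$ and invokes $\rho(y,S_\infty\setminus S_n)\to\infty$ together with $\lim_{t\to\infty}\delta(t)=0$, whereas you bound it by $C\,\mathbb{P}_y(D_n)$ with $D_n\downarrow\emptyset$, a slightly more elementary estimate that only uses $\delta\le 1$, $X\le C$, and continuity from above of the measure.
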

\begin{proof}
Since $S_{\infty}= \bigcup_{n\ge 1}S_n$, we have $\rho(y, S_{\infty}\backslash S_n) \to \infty, n \to \infty$. Then for any $\varepsilon >0$, there exists $N = N(\varepsilon, y)$ such that for $n > N$, $\E_y[\delta(\tau_{S_{\infty}\backslash S_n})] < \varepsilon$ since $\lim_{t \to \infty}\delta(t) = 0$.

For any $R_n$ such that $S_n \subset R_n \subset \mathbb{X}\backslash\{y\}$, denote $\overline{R_n} : = R_n \bigcup S_{\infty}$, then we have,
\begin{align*}
    J(y, R_n) - J(y, \overline{R_n}) & = \E_y[(\delta(\tau_{R_n})X_{\tau_{R_n}} - \delta(\tau_{\overline {R_n}})X_{\tau_{\overline{R_n}}} )\textbf{1}_{\{X_{\tau_{\overline{R_n}}} \in S_{\infty}\backslash R_n\}}] \\
    & \le C\E_y[\delta(\tau_{R_n})\textbf{1}_{\{X_{\tau_{\overline{R_n}}} \in S_{\infty}\backslash R_n\}}] \\
    & \le C\E_y[\delta(\tau_{S_{\infty}\backslash R_n})\textbf{1}_{\{X_{\tau_{\overline{R_n}}} \in S_{\infty}\backslash R_n\}}] \\
    & \le C\varepsilon
\end{align*}

Since $S_{\infty} \subset \overline{R_n} \subset \mathbb{X}\backslash\{y\}$, by definition, $J(y, \overline{R_n}) \le V_{\infty}$. Therefore we have that for any $\varepsilon>0$, there exists $N$ such that for any $n \ge N$,\[
V_n = \sup_{R_n: S_n \subset R_n \subset \mathbb{X}\backslash\{y\}}J(y, R_n) \le V_{\infty} + C\varepsilon.
\]
Clearly $S_n \subset S_{n+1}$ implies that $V_n$ is non-increasing and $V_n \ge V_{\infty}$ for all $n$. This completes the proof that $V_n \searrow V_{\infty}, n \to \infty$.
\end{proof}

\begin{proof}[\textbf{Proof of Theorem \ref{Existence}}]
By Corollary \ref{c1} and Lemma \ref{L21}, to show that $S_{\infty}$ is an optimal mild equilibrium, it suffices to show $S_\infty$ is a mild equilibrium.

Suppose $S_{\infty}$ is not a mild equilibrium. Then \[
\alpha := \sup_{x\in \mathbb{X}}\{x -J(x, S_{\infty})\} >0.
\]
For any $\varepsilon>0$, there exists $y \notin S_{\infty}$ such that $y - J(y, S_{\infty}) \ge \alpha- \varepsilon$.
Since $y \notin S_n$ for all $n \ge 0$, we have \[
y \le \sup_{S: S_n \subset S \subset \mathbb{X}\backslash \{y\}} J(y, S), \quad \forall n\ge 0.
\]
By Lemma \ref{L22}, \[
y \le \sup_{S: S_{\infty} \subset S \subset \mathbb{X}\backslash \{y\}} J(y, S).\]
Thus, there exists subset $R$ with $S_{\infty} \subset R \subset \mathbb{X}\backslash \{y\}$ such that \[
y \le J(y, R)+\varepsilon.
\]
Then we have $J(y, R)- J(y, S_{\infty}) \ge y- \varepsilon+ \alpha - \varepsilon -y = \alpha-2\varepsilon$. Since $S_{\infty} \subset R$, $\rho(y, S_{\infty}) \ge \rho(y, R)$. It follows that
\begin{align*}
 J(y, R)- J(y, S_{\infty})
=\,\,&  \E_y[\delta(\rho(y, R)) X_{\rho(y, R)}] -  \E_y[\E_y[\delta(\rho(y, S_{\infty})) X_{\rho(y, S_{\infty})}| \mathcal{F}_{\rho(y, R)}] ]\\
\le \,\,&  \E_y[\delta(\rho(y, R)) X_{\rho(y, R)}] -  \E_y[\delta(\rho(y, R))\E_y[\delta(\rho(y, S_{\infty})-\rho(y, R)) X_{\rho(y, S_{\infty})}| \mathcal{F}_{\rho(y, R)}] ]\\
= \,\,&  \E_y[\delta(\rho(y, R)) (X_{\rho(y, R)}- \E_{X_{\rho(y, R)}}[\delta(\rho(X_{\rho(y, R)}, S_{\infty}))X_{\rho(X_{\rho(y, R)}, S_{\infty})}]) \\
\le \,\,& \E_y[\delta(\rho(y, R))]\alpha\\
 \le \,\,&  \E_y[\delta(T_y)]\alpha. 
\end{align*}
By Assumption \ref{A2} (i), $\lambda = \sup_{x\in \mathbb{X}}\lambda_x < \infty$ and since $y \notin R$, we have $0< \E_y[\delta(T_y)]<c<1$ where $c= \int_0^{\infty}\delta(t) \lambda e^{-\lambda t} dt$.
By choosing $0< \varepsilon \le \frac{\alpha(1- c)}{2}$, we obtain a contradiction.

Next let us prove $S_\infty$ is a strong equilibrium. If $X$ is irreducible, then $S_\infty$ is a strong equilibrium by Theorem \ref{Connection}. In general, following the proof for Proposition \ref{p1}, to show $S_\infty$ is a strong equilibrium, it suffices to show that for any $x\in S_\infty$ with $\lambda_x>0$,
\begin{equation}\label{e12}
x(\lambda_x - \delta'(0)) > \sum_{y\in S_\infty\backslash \{x\}}y q_{xy} + \sum_{y \in S_\infty^c} \E_y[\delta(\tau_S)X_{\tau_S}]q_{xy}.
\end{equation}
Take $x\in S_\infty$ with $\lambda_x>0$. Following the argument for \eqref{e9}, we have that
$$J(x,\hat S_\infty)\geq\left(\sum_{y \in S_\infty\backslash \{x\}}\frac{q_{xy}}{\lambda_x}  y+\sum_{y \notin S_\infty}\frac{q_{xy}}{\lambda_x} J(y, \hat S_\infty)\right)\cdot\E_x[\delta(T_x)],$$
where $\hat S_\infty=S_\infty\setminus\{x\}$. Using an argument similar to that for \eqref{e10}, we have that
$$\left(1 - c_x  \sum_{y \notin S_\infty}\frac{q_{xy}}{\lambda_x}(\E_y[\delta(\tau_{S_\infty})\textbf{1}_{\{X_{\tau_{S_\infty}} = x\}}])\right)(x -J(x, \hat S_\infty)) \leq  x -  ( \text{I}_\infty+  \text{II}_\infty)c_x,$$
where
$$\text{I}_\infty : =   \sum_{y \in S_\infty\backslash \{x\}} \frac{q_{xy}}{\lambda_x}y\quad\text{and}\quad\text{II}_\infty  : = \sum_{y \notin S_\infty}\frac{q_{xy}}{\lambda_x} J(y, S_\infty).$$
Since $S_\infty$ is the smallest mild equilibrium, $\hat S_\infty$ is not a mild equilibrium. Then $x>J(x,\hat S_\infty)$ by \eqref{e11}. Therefore,
$$x -  ( \text{I}_\infty+  \text{II}_\infty)c_x>0,$$
which implies \eqref{e12}.
\end{proof}

 {
\section{Examples illustrating the iteration method in Theorem \ref{Existence}}

In this section, we provide examples to demonstrate the iteration method in Theorem \ref{Existence}.

The next proposition shows that the iteration method in Theorem \ref{Existence} will terminate within one step in the case of time consistency and leads to an optimal stopping time.

\begin{prop}\label{prop1} If $\delta(s)\delta(t) = \delta(s+t)$ for all $t, s \ge 0$ and Assumptions \ref{A1} (i) holds, then $S_1 = S_n$ for all $n \ge 2$ and $S_1$ is an optimal stopping strategy.
\end{prop}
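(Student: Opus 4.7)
The plan is to reduce to classical Markov-chain optimal stopping: identify $S_1$ as Dirichlet-data for the classical value function $V$, prove $J(\cdot,S_1)=V$, and then read off both the stabilization of the iteration and the optimality of $\tau_{S_1}$.

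Since $\delta(s)\delta(t)=\delta(s+t)$ with $\delta(0)=1$, $\lim_{t\to\infty}\delta(t)=0$ and $\delta$ decreasing, we may write $\delta(t)=e^{-rt}$ for some $r>0$. Let $V(x):=\sup_\tau\E_x[\delta(\tau)X_\tau]$; under Assumption \ref{A1}(i), $V\le C$, and classical Markov-chain optimal stopping theory gives the Bellman identity
\[
V(x)=\max\Bigl(x,\;\tfrac{1}{\lambda_x+r}\textstyle\sum_{y}q_{xy}V(y)\Bigr),\qquad x\in\mathbb{X},
\]
with optimal stopping region $S^*:=\{x:V(x)=x\}$ (and $J(y,S^*)=V(y)$ for every $y$). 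The strong Markov property and multiplicativity of $\delta$ give, for any $S\subset\mathbb{X}\setminus\{x\}$,
\[
J(x,S)=\tfrac{1}{\lambda_x+r}\textstyle\sum_{y}q_{xy}J(y,S)\le\tfrac{1}{\lambda_x+r}\textstyle\sum_{y}q_{xy}V(y).
\]
In particular, if $x\notin S^*$ then $S^*\subset\mathbb{X}\setminus\{x\}$ and $\sup_{S\subset\mathbb{X}\setminus\{x\}}J(x,S)\ge J(x,S^*)=V(x)>x$, which rules out $x\in S_1$; hence $S_1\subset S^*$.

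The heart of the proof is $J(\cdot,S_1)=V$. Strong Markov makes $J(\cdot,S_1)$ solve the Dirichlet-type system $f=\mathrm{id}$ on $S_1$ and $(\lambda_x+r)f(x)=\sum_y q_{xy}f(y)$ on $S_1^c$. I would check that $V$ solves the same system. On $S_1\subset S^*$ one has $V=\mathrm{id}$; on $S_1^c$ either $x\notin S^*$ (Bellman directly yields the continuation equation) or $x\in S^*\setminus S_1$, in which case $V(x)=x\le\sup_{S\subset\mathbb{X}\setminus\{x\}}J(x,S)\le\tfrac{1}{\lambda_x+r}\sum_y q_{xy}V(y)\le V(x)$ forces equality throughout and again the continuation equation. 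For uniqueness I set $g:=V-J(\cdot,S_1)$: $g\equiv 0$ on $S_1$, and on $S_1^c$, $(\lambda_x+r)|g(x)|\le\sum_y q_{xy}|g(y)|\le\lambda_x\|g\|_\infty$, so $|g(x)|\le\tfrac{\lambda_x}{\lambda_x+r}\|g\|_\infty\le\tfrac{\lambda}{\lambda+r}\|g\|_\infty$ using $\lambda:=\sup_x\lambda_x<\infty$ from Assumption \ref{A1}(i); taking the sup over $x$ gives $\|g\|_\infty\le\tfrac{\lambda}{\lambda+r}\|g\|_\infty$, hence $g\equiv 0$.

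Once $J(\cdot,S_1)=V$, both conclusions follow. First, $\tau_{S_1}$ attains $V(x)$ for every $x$, so $S_1$ is an optimal stopping strategy. Second, $J(x,S_1)=V(x)\ge x$ for $x\notin S_1$ makes $S_1$ a mild equilibrium, so $\sup_{S:S_1\subset S\subset\mathbb{X}\setminus\{x\}}J(x,S)\ge J(x,S_1)\ge x$ for every such $x$, meaning no point is added at the next step, i.e.\ $S_2=S_1$; iterating gives $S_n=S_1$ for all $n\ge 2$. The main obstacle is the uniqueness step, and in particular the delicate case $S_1\subsetneq S^*$ which can arise when the Bellman equation ties on $S^*\setminus S_1$; the maximum-principle estimate is made available precisely by the strict discount rate $r>0$ together with $\lambda<\infty$ from Assumption \ref{A1}(i).
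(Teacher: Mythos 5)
Your argument is correct, but it takes a genuinely different route from the paper's. The paper never introduces the classical value function: it shows directly that for $x\notin S_1$ and any $R\subset\mathbb{X}\setminus\{x\}$ one has $J(x,R)\le J(x,R\cup S_1)$, by conditioning at $\tau_{R\cup S_1}$ and using that on the event $\{X_{\tau_{R\cup S_1}}\notin R\}$ the chain sits at a point of $S_1$, where by the very definition of $S_1$ immediate stopping strictly beats any continuation; this at once gives $\sup_{R\subset\mathbb{X}\setminus\{x\}}J(x,R)\le\sup_{S_1\subset S\subset\mathbb{X}\setminus\{x\}}J(x,S)$, hence $S_2=S_1$, and a second application of the same comparison (via $\tilde S=S\cup S_1$) yields $J(x,S_1)\ge J(x,S)$ for every $S$. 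You instead set $\delta(t)=e^{-rt}$, bring in the classical value function $V$ with its Bellman identity and optimal region $S^*$, show $S_1\subset S^*$, and identify $J(\cdot,S_1)=V$ by checking both solve the same Dirichlet system and proving uniqueness via the contraction $\|g\|_\infty\le\tfrac{\lambda}{\lambda+r}\|g\|_\infty$ (which is where Assumption \ref{A1}(i) and $r>0$ enter). Your route buys a stronger conclusion — $\tau_{S_1}$ is optimal among \emph{all} stopping times, not merely among first-entrance strategies, and you get the structural facts $S_1\subset S^*$ and $J(\cdot,S_1)=V$ — at the price of invoking the Snell-envelope/Bellman machinery as a black box (optimality of $\tau_{S^*}$ on a countable state space with bounded payoff and $r>0$ is standard but not proved here), and your treatment of the tied case $x\in S^*\setminus S_1$ is the genuinely delicate point, which you handle correctly by squeezing $x\le\sup_S J(x,S)\le\tfrac{1}{\lambda_x+r}\sum_{y\ne x}q_{xy}V(y)\le V(x)=x$. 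The paper's argument is more elementary and self-contained, and its comparison lemma is reused elsewhere; the only cosmetic issue in yours is that the sums $\sum_y q_{xy}$ should read $\sum_{y\ne x}q_{xy}$.
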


\begin{proof} By definition, $S_0 = \emptyset$ and \[
S_1 =\{ x \in \mathbb{X}: x > \sup_{S\subset \mathbb{X}\backslash \{x\}} J(x, S)\}.
\] 

We show that for any $x \notin S_1$ and any set $R \subset \mathbb{X}\backslash \{x\}$, we have $J(x, R) \le J(x, \tilde R)$ where $\tilde R = R \cup S_1 = R\cup (S_1 \backslash R)$. 

For $S_1 \subset R$, $R = \tilde R$ and $J(x, R) \le J(x, \tilde R)$  holds trivially.

For $S_1 \not\subset R$, denote $\gamma = \tau_R$ and $\tilde \gamma = \tau_{\tilde R}$. Then a.s. $\gamma \ge \tilde \gamma$. We have 
\begin{align*}
J(x, R) - J(x, \tilde R) & = \E_x[\delta(\gamma)X_{\gamma}] -  \E_x[\delta(\tilde\gamma)X_{\tilde\gamma}]\\
& =  \E_x[\delta(\gamma)X_{\gamma}(\mathbf{1}_{\{X_{\tilde\gamma} \in R\}} +\mathbf{1}_{\{X_{\tilde\gamma} \notin R\}} )] -  \E_x[\delta(\tilde\gamma)X_{\tilde\gamma}]\\
& = \E_x[(\delta(\gamma)X_{\gamma} - \delta(\tilde\gamma)X_{\tilde\gamma})\mathbf{1}_{\{X_{\tilde\gamma} \notin R\}}]\\
& = \E_x[\E[(\delta(\gamma)X_{\gamma} - \delta(\tilde\gamma)X_{\tilde\gamma})\mathbf{1}_{\{X_{\tilde\gamma} \notin R\}} | \mathcal{F}_{\tilde \gamma}]]\\
& = \E_x[\E_{X_{\tilde \gamma}}[\delta(\gamma)X_{\gamma}] \delta({\tilde\gamma})\mathbf{1}_{\{X_{\tilde\gamma} \notin R\}} - \delta(\tilde\gamma)X_{\tilde\gamma}\mathbf{1}_{\{X_{\tilde\gamma} \notin R\}} ]\\
& = \E_x[ \delta(\tilde\gamma)\mathbf{1}_{\{X_{\tilde\gamma} \notin R\}} ( \E_{X_{\tilde \gamma}}[\delta(\gamma)X_{\gamma}]-X_{\tilde\gamma})]\\
& \le 0,
\end{align*}
since on $\{ X_{\tilde\gamma} \notin R\}$, $X_{\tilde\gamma} \in S_1$ and $X_{\tilde\gamma} >\E_{X_{\tilde \gamma}}[\delta(\gamma)X_{\gamma}]$.

As a result, $J(x, R) \le \sup_{S: S_1 \subset S \subset \mathbb{X}\backslash \{x\}} J(x, S)$ for all $R \subset \mathbb{X}\backslash \{x\}$. Thus for any $x \not\in S_1$, $x \le \sup_{R: R\subset \mathbb{X}\backslash \{x\}} J(x, R) \le  \sup_{S: S_1 \subset S \subset \mathbb{X}\backslash \{x\}} J(x, S)$, which implies $S_1  = S_2 = S_{\infty}$. 

Next we show that for all $x\in \mathbb{X}$, \[
J(x, S_1) \ge J(x, S), \quad \quad \forall S \subset \mathbb{X}.
\]

For $x \in S_1$, by definition of $S_1$,  $x > \sup_{S\subset \mathbb{X}\backslash \{x\}} J(x, S)$. Therefore $x \ge \sup_{S\subset \mathbb{X}} J(x, S)$.

For $x \not\in S_1$, for any $S\subset \mathbb{X}$, let $\tilde S = S\cup S_1$.
Since $S \subset \tilde S$ and $S_1 \subset \tilde S$, we have a.s. $\tau_{S} \ge \tau_{\tilde S}$ and $\tau_{S_1} \ge \tau_{\tilde S}$. By similar arguments as above, we obtain
\[
J(x, \tilde S) - J(x, S) = \E_x[\delta(\tau_{\tilde S})\mathbf{1}_{\{X_{\tilde S} \notin S\}}(X_{\tau_{\tilde S}} - \E_{X_{\tau_{\tilde S}}}[\delta(\tau_S)X_{\tau_S}])] \ge 0,
\] 
and \[
J(x, S_1) - J(x, \tilde S)  = \E_x[\delta(\tau_{\tilde S})\mathbf{1}_{\{X_{\tilde S} \notin S_1\}}(\E_{X_{\tau_{\tilde S}}}[\delta(\tau_{S_1})X_{\tau_{S_1}}]- X_{\tau_{\tilde S}} )] \ge 0.
\]

Therefore $J(x, S_1) \ge J(x, \tilde S) \ge J(x, S)$.
\end{proof}

In the case of time inconsistency, the above result generally does not hold. The next example demonstrates an application of the iteration method in Theorem \ref{Existence}.

\begin{Example}\label{E1}
Consider hyperbolic discount function $\delta(t) = \frac{1}{1+ \beta t}$ for $\beta >0$ and  $\mathbb{X} = \{x_1, x_2, x_3, x_4\}$, whose generator is given by 
\[
Q = \begin{bmatrix}
-\lambda_1 & q_{12} & q_{13} & q_{14} \\
 q_{21} & -\lambda_2 & q_{23} & q_{24} \\
q_{31} & q_{23} & -\lambda_3 &  q_{34} \\
q_{41} & q_{42} & q_{43} & -\lambda_4
\end{bmatrix}
= \begin{bmatrix}
-3 & 1 & 1 & 1 \\
0 & -1 & 0 & 1 \\
0 & 0.4 & -2 & 1.6 \\
1 & 1 & 1 & -3
\end{bmatrix}.
\]

Let $\beta = 3$, $x_1 = 10, x_2 = 40, x_3 = 46, x_4 = 100$. 

Next we show that by applying the iteration method, we have $S_0 = \emptyset, S_1 = \{x_2, x_4\}, S_2 = \{x_2, x_3, x_4\} = S_{\infty}$. 

Denote $T_i := \inf\{t \ge 0: X_t \ne x_i| X_0 = x_i\}$.

(i) Since $100 = x_4 > x_3 = 46 \ge \sup_{S\subset \mathbb{X}\backslash \{x_4\}} J(x_4, S)$, we have that $x_4 \in S_1$.

(ii) For $x_3$, consider $S = \{x_4\}$. \begin{align*}J(x_3, \{x_4\}) &= \E_{x_3}[\delta(\tau_{\{x_4\}})X_{\tau_{\{x_4\}}}] \\
&= x_4(\frac{q_{34}}{\lambda_3}\E_{x_3}[\delta(T_3)] +\frac{q_{32}}{\lambda_3} \E_{x_3}[\E[\delta(T_2+T_3) | X_{T_3} = x_2]])\\
& = 100(0.8\int_0^{\infty} \frac{2}{1+3t} e^{-2t}dt + 0.2\int_0^{\infty}\int_0^{\infty} \frac{2}{1+3(t+s)} e^{-t-2s} dt ds)\\
& \doteq 100 (0.8\times 0.5173 + 0.2 \times 0.2539) = 46.46.
\end{align*}
Therefore $x_3 = 46 < J(x_3, \{x_4\}) \le \sup_{S\subset \mathbb{X}\backslash \{x_3\}} J(x_3, S) $ and $x_3 \notin S_1$.

(iii) Note that $\sup_{S\subset \mathbb{X}\backslash \{x_2\}} J(x_2, S) \le x_4 \E_{x_2}[\delta(T_2)]$. We have \[
\E_{x_2}[\delta(T_2)] = \int_0^{\infty} \frac{1}{1+3t}e^{-t} dt  \doteq  0.3856.
\]
Therefore $x_2 = 40 > 0.3856 \times 100 = x_4 \E_{x_2}[\delta(T_2)] \ge \sup_{S\subset \mathbb{X}\backslash \{x_2\}} J(x_2, S) $ and $x_2 \in S_1$.

(iv) For $x_1$, consider $S = \{x_2, x_3, x_4\}$. \begin{align*}J(x_1, \{x_2, x_3, x_4\}) &= \E_{x_1}[\delta(T_1)X_{\tau_{\{x_2, x_3, x_4\}}}] \\
&= \frac{1}{3}(x_2+x_3+x_4)\E_{x_1}[\delta(T_1)]\\
& = 62\times\int_0^{\infty} \frac{3}{1+3t}e^{-3t} dt \doteq 62\times 0.5963.
\end{align*}
Thus $x_1 = 10 \le J(x_1, \{x_2, x_3, x_4\}) \le \sup_{S\subset \mathbb{X}\backslash \{x_1\}} J(x_1, S)$ and $x_1 \notin S_1$.

(v) By (iv), $x_1 \notin S_2$ given that $S_1 = \{x_2, x_4\}$.

(vi) To show that $x_3 \in S_2$, we only need to show that $J(x_3, \{x_2, x_4\}) < x_3$ since $S_1 = \{x_2, x_4\}$ and $q_{31} = 0$.
\begin{align*}
J(x_3, \{x_2, x_4\})&= (\frac{q_{32}}{\lambda_3}x_2+ \frac{q_{34}}{\lambda_3}x_4)\E_{x_3}[\delta(T_3)]\\
& = (0.2\times 40 + 0.8 \times 100)\int_0^{\infty} \frac{2}{1+3t}e^{-2t}dt\\
&\doteq 88\times 0.5173  = 45.52
\end{align*}
Thus $x_3 \in S_2$.

(vii) Again by (iv), $x_1 \notin S_3$ given that $S_2 = \{x_2, x_3, x_4\}$. Therefore $S_n = S_2 = \{x_2, x_3, x_4\}$ for $n \ge 2$.

\end{Example}

}

{ 

\subsection{Example 3.2}\label{E2} In this example, process $X$ has infinite state space and can be viewed as the payoff of some American option. Consider a stock price process $Y$ that takes values in $\mathbb{Y} := \{ u^i: i \in \mathbb{Z}\}$ for some fixed $u > 1$. There exists $\lambda > 0$ and $p \in [\frac{1}{1+u}, 1)$ such that\[
q_{u^i u^{i+1}} = p\lambda, \quad  q_{u^i u^{i-1}} = (1-p)\lambda, \quad \forall i  \in \mathbb{Z}.
\]

Let  the discount function be $\delta(t) = 
\frac{1}{1+\beta t}$ for some constant $\beta > 0$ and let the payoff process be $X = f(Y)$ for some payoff function $f(y) = (K - y)^+$, where $K$ is a positive constant. Since $f$ is bounded and nonnegative, our results still holds when we have $X = f(Y)$. Next we will show how to use the iteration method to find an optimal mild equilibrium in this problem.

\begin{lemma} \label{L0} $S_1 = \{u^i \in (0, K): K- u^i > J(u^i, \{u^m\}), \forall m < i, m\in \mathbb{Z},  i \in \mathbb{Z}\}$.
\end{lemma}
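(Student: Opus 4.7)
The plan is to prove the set equality by establishing both inclusions.

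The forward inclusion is immediate from the definition of $S_1$: if $u^i \in S_1$, then $f(u^i) > \sup_{S \subseteq \mathbb{Y} \setminus \{u^i\}} J(u^i, S) \geq 0$, which forces $u^i \in (0, K)$ and $f(u^i) = K - u^i$. Specializing $S$ to the singletons $\{u^m\}$ with $m < i$ then yields the required strict inequalities.

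For the reverse inclusion, fix $u^i \in (0, K)$ with $K - u^i > J(u^i, \{u^m\})$ for all $m < i$. The goal is to verify $K - u^i > J(u^i, S)$ for every $S \subseteq \mathbb{Y} \setminus \{u^i\}$. First, the singleton bound extends automatically to $m > i$: $u^m > u^i$ gives $f(u^m) = (K - u^m)^+ < K - u^i$, and since $\delta \leq 1$, $J(u^i, \{u^m\}) \leq f(u^m) < K - u^i$. Next, I exploit the nearest-neighbor structure of $Y$: starting from $u^i$, the process enters $S$ only through the element of $S$ nearest to $u^i$ on either side. Letting $m^- = \max\{m < i : u^m \in S\}$ and $m^+ = \min\{m > i : u^m \in S\}$ (either possibly absent), we have $\tau_S = \tau_{\{u^{m^-}, u^{m^+}\}}$ and $Y_{\tau_S} \in \{u^{m^-}, u^{m^+}\}$ almost surely, hence $J(u^i, S) = J(u^i, \{u^{m^-}, u^{m^+}\})$. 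If only one of $m^{\pm}$ is defined this reduces to a singleton already handled.

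The remaining case is the pair $\{u^a, u^b\}$ with $a < i < b$. Set $\tau := \tau_{\{u^a, u^b\}}$ and $\alpha_c := \E_{u^i}[\delta(\tau) \mathbf{1}_{Y_{\tau} = u^c}]$ for $c \in \{a, b\}$. Applying the strong Markov property to the decomposition $\tau_{\{u^a\}} = \tau + (\tau_{\{u^a\}} - \tau)$ together with log-subadditivity $\delta(s+t) \geq \delta(s)\delta(t)$ from Assumption \ref{A2}(i) gives
\begin{equation*}
J(u^i, \{u^a\}) \geq \alpha_a f(u^a) + \alpha_b J(u^b, \{u^a\}).
\end{equation*}
Subtracting this from $J(u^i, \{u^a, u^b\}) = \alpha_a f(u^a) + \alpha_b f(u^b)$ yields
\begin{equation*}
J(u^i, \{u^a, u^b\}) \leq J(u^i, \{u^a\}) + \alpha_b \bigl(f(u^b) - J(u^b, \{u^a\})\bigr).
\end{equation*}
When $f(u^b) \leq J(u^b, \{u^a\})$ (automatic once $u^b \geq K$, where $f(u^b) = 0$), the right-hand side is bounded by $J(u^i, \{u^a\}) < K - u^i$, closing this branch.

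The delicate sub-case, which I expect to be the main obstacle, is $u^b < K$ with $f(u^b) > J(u^b, \{u^a\})$. My plan is to combine the symmetric inequality $J(u^i, \{u^b\}) \geq \alpha_b f(u^b) + \alpha_a J(u^a, \{u^b\})$ (obtained by swapping $a$ and $b$ in the preceding derivation) with the explicit gambler's-ruin formulas for the biased nearest-neighbor random walk on $\{u^k\}$ and the hyperbolic form $\delta(t) = 1/(1 + \beta t)$, and then to invoke the full family of hypotheses $\{J(u^i, \{u^m\}) < K - u^i : m < i\}$ to conclude $\alpha_a (K - u^a) + \alpha_b (K - u^b) < K - u^i$ in this sub-case as well.
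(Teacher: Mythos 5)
Your forward inclusion and your reduction of a general $S$ to the two nearest neighbours $\{u^{m^-},u^{m^+}\}$ match the paper, and your log-subadditivity estimate $J(u^i,\{u^a\})\geq \alpha_a f(u^a)+\alpha_b J(u^b,\{u^a\})$ is correct and disposes of the case $u^b\geq K$ exactly as the paper does (there the paper notes $(K-u^r)^+=0$ so $J(u^i,S)\leq J(u^i,\{u^l\})$). However, the last sub-case --- $a<i<b$ with $u^b<K$ and $f(u^b)>J(u^b,\{u^a\})$ --- is left as an unexecuted ``plan'': you state an intention to combine gambler's-ruin formulas with the hyperbolic form of $\delta$ and the family of hypotheses, but you do not carry out the computation, and it is not clear that the symmetric inequality you propose actually closes the estimate (it leaves a term $\alpha_a(f(u^a)-J(u^a,\{u^b\}))$ of uncontrolled sign). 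This is a genuine gap, since that sub-case is nonempty and is in fact the heart of the lemma.

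The paper closes it with a much simpler observation that bypasses any explicit formula for $\alpha_a,\alpha_b$: because $p\geq \frac{1}{1+u}$, one has $pu+(1-p)u^{-1}\geq 1$, so $Y$ is a submartingale. Optional stopping at the bounded two-sided exit time $\tau=\tau_{\{u^a,u^b\}}$ gives $\E_{u^i}[Y_\tau]\geq u^i$, and since both exit points $u^a,u^b$ lie below $K$ we get $\E_{u^i}[(K-Y_\tau)^+]=K-\E_{u^i}[Y_\tau]\leq K-u^i$; multiplying the integrand by $\delta(\tau)<1$ (valid as $\tau>0$ a.s.) yields $J(u^i,S)<K-u^i$ with no reference to the hypotheses at all in this branch. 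You should replace your planned computation with this argument. Two smaller points: your treatment never uses that you need \emph{strict} inequality against the supremum over the infinite family $\{u^m\}_{m<i}$; the paper secures this via $\lim_{n\to-\infty}J(u^i,\{u^n\})=0$, and you should do the same. Also, invoking the specific hyperbolic form of $\delta$ is unnecessary; only $\delta\leq 1$, $\delta(s+t)\geq\delta(s)\delta(t)$, and $\delta(t)\to 0$ are needed.
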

\begin{proof} 
Since for any $u^i \ge K$, $f(u^i) = (K-u^i)^+ = 0 \le J(u^i, (0, K) \cap \mathbb{Y})$, we obtain that $S_1 \subset (0, K)$. Thus we only consider $u^i \in (0, K)$ and we show that $u^i \in S_1$ if and only if  $u^i \in (0, K)$ and  $K - u^i > J(u^i, \{u^l\})$ for all $l < i$.

``$\Longrightarrow$'': Take $u^i\in S_1$. Then obviously $K - u^i > J(u^i, \{u^l\})$ for all $l < i$.
 
``$\Longleftarrow$'': Take $u^i \in (0, K)$. For any nonempty set $S\subset \mathbb{Y}\backslash \{u^i\}$, there are three cases.

Case 1: $S\in A:=\{\tilde S\subset \mathbb{Y}\backslash \{u^i\}: \tilde S \cap (0, u^i) = \emptyset\text{ and }\tilde S \cap (u^i, \infty) \ne \emptyset\}$. Let $u^r = \min (S \cap (u^i, \infty))$. Then $J(u^i, S) = J(u^i, u^r)$. Since $u^r > u^i$, we have $f(u^i) = K - u^i > J(u^i, \{u^r\}) = (K-u^r)^+\E_{u^i}[\delta(\tau_{\{u^r\}})]$. Then obviously we have that
$$f(u^i)>\sup_{S\in A}J(u^i,S).$$

Case 2: $S\in B:=\{\tilde S\subset \mathbb{Y}\backslash \{u^i\}: \tilde S \cap (0, u^i) \neq \emptyset\text{ and }\tilde S \cap (u^i, \infty) = \emptyset\}$. Note that for any $n\in\mathbb{Z}$ such that $n<i$,
$$K-u^i>\sup_{n\leq k\leq i-1}J(u^i,\{u^k\}).$$
Moreover, $\lim_{n\to\-\infty}J(u^i,\{u^n\})=0$. Thus
\begin{equation}\label{ee102}
K-u^i>\sup_{k\leq i-1}J(u^i,\{u^k\}).
\end{equation}
Now let $u^l = \max (S \cap (0, u^i))$. Then $J(u^i, S) = J(u^i, \{u^l\})$. Thus by \eqref{ee102},
$$f(u^i)>\sup_{S\in B}J(u^i,S).$$

Case 3: $S\in C:=\{\tilde S\subset \mathbb{Y}\backslash \{u^i\}: \tilde S \cap (0, u^i) \neq \emptyset\text{ and }\tilde S \cap (u^i, \infty) \neq \emptyset\}$. Let $u^l = \max (S \cap (0, u^i))$ and $u^r = \min (S \cap (u^i, \infty))$. 

If $u^r \le K$, observe that $Y$ is a submartingale, so $u^i \le \E_{u^i}[Y_{\tau_{\{u^l, u^r\}}}] = \E_{u^i}[Y_{\tau_S}]$. Thus $f(u^i) = K - u^i \ge\E_{u^i}[(K - Y_{\tau_S})^+] = \E_{u^i}[(K - Y_{\tau_S})^+]> \E_{u^i}[\delta(\tau_{\{u^l, u^r\}})(K - Y_{\tau_S})^+]$.

If $u^r > K$, then 
\begin{align*}
J(u^i, S) & = \E_{u^i}[\delta(\tau_{\{u^l\}})(K - u^l)^+ \mathbf{1}_{\{\tau_{\{u^l\}} < \tau_{\{u^r\}}\}}]  + \E_{u^i}[\delta(\tau_{\{u^r\}})(K - u^r)^+ \mathbf{1}_{\{\tau_{\{u^r\}} < \tau_{\{u^l\}}\}}]\\
& =  \E_{u^i}[\delta(\tau_{\{u^l\}})(K - u^l)\mathbf{1}_{\{\tau_{\{u^l\}} < \tau_{\{u^r\}}\}} ]\\
& \le  \E_{u^i}[\delta(\tau_{\{u^l\}})(K - u^l)] \\
& = J(u^i, \{u^l\})\\
& \leq \sup_{S\in B}J(u^i,S).
\end{align*}

Therefore,
$$f(u^i)>\sup_{S\in C}J(u^i,S).$$
This completes the proof.
\end{proof}

Fix $m, i \in \mathbb{Z}$ such that $m < i  < \log_u K$. 
$J(u^i, \{u^m\}) = (K-u^m)\E_{u^i}[\delta(\tau_{\{u^m\}})]$. 
Since $(q_{u^i u^{j}})_{j \ne i}$ are the same for each $i \in \mathbb{Z}$, we have $\E_{u^i}[\delta(\tau_{\{u^m\}})] = \E_{u^{i-k}}[\delta(\tau_{\{u^{m-k}\}})]$  for any $k \in \mathbb{N}$. Therefore denote $\alpha_{i-m}: =\E_{u^i}[\delta(\tau_{\{u^m\}})]$. Note that $\alpha_n, n\in \mathbb{N}$ can be computed explicitly.
 For example,\[
\alpha_1 = \sum_{k = 1}^{\infty} \frac{\binom{2k-1}{k}p^{k-1}(1-p)^k}{2k-1}\cdot\int_0^{\infty}\frac{1}{1+\beta t}g(t, 2k-1)dt,
\]
where $g(t,n) = \frac{\lambda^n}{(n-1)!} t^{n-1}e^{-\lambda t}$ is the density function of gamma distribution with shape parameter $n$ and rate parameter $\lambda$.

\begin{prop}\label{prop2} $S_{\infty} = \{u^i: i \le n_0\}$ where $n_0 = \lceil \log_u (\frac{1-\alpha_1}{u-\alpha_1}K) \rceil$.
\end{prop}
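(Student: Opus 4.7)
The plan is to prove the proposition in two steps: first verify $S := \{u^i : i \le n_0\}$ is a mild equilibrium, then identify $S_\infty$ with $S$.

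For Step 1, I take $u^i \notin S$ and check $f(u^i) \le J(u^i, S)$. The case $u^i \ge K$ is immediate since $f(u^i)=0$. When $n_0 < i < \log_u K$, skip-freeness of $Y$ on $\{u^j : j \in \mathbb{Z}\}$ forces $\tau_S = \tau_{\{u^{n_0}\}}$ starting from $u^i$, hence $J(u^i, S) = (K - u^{n_0})\alpha_{i-n_0}$. Writing $k = i - n_0$, the required inequality $K - u^{n_0+k} \le (K - u^{n_0})\alpha_k$ rearranges to $K/u^{n_0} \le (u^k - \alpha_k)/(1 - \alpha_k)$. The definition of $n_0$ supplies $K/u^{n_0} \le (u-\alpha_1)/(1-\alpha_1)$, so it suffices to show $(u-\alpha_1)/(1-\alpha_1) \le (u^k - \alpha_k)/(1-\alpha_k)$ for $k \ge 1$. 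After cross-multiplication and collecting terms this reduces to
\[
(1-\alpha_1)\sum_{j=0}^{k-1} u^j \ge 1 - \alpha_k,
\]
which combines log-subadditivity ($\alpha_k \ge \alpha_1^k$, yielding $1 - \alpha_k \le (1-\alpha_1)\sum_{j=0}^{k-1}\alpha_1^j$) with the trivial bound $u^j \ge \alpha_1^j$ (from $u > 1 > \alpha_1$).

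Step 2 reduces to showing $S_\infty$ is downward-closed. Granted this, $S_\infty = \{u^j : j \le n^*\}$ for some $n^*$, and Theorem 2.2 combined with Step 1 gives $n^* \le n_0$. Applying the mild-equilibrium condition of $S_\infty$ at $y = u^{n^*+1}$, which lies in $(0, K)$ when $n^* < n_0$, forces $K - u^{n^*+1} \le (K - u^{n^*})\alpha_1$. But if $n^* < n_0$, then $u^{n^*} \le u^{n_0-1} < (1-\alpha_1)K/(u-\alpha_1)$ by the definition of $n_0$, and the same rearrangement used in Step 1 (now in the $k=1$ direction) yields the strict reverse $K - u^{n^*+1} > (K - u^{n^*})\alpha_1$, a contradiction. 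Thus $n^* = n_0$ and $S_\infty = S$.

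The main obstacle is proving downward closure of $S_\infty$. I plan to establish it by induction on the iterates $S_n$ of Theorem 2.2, with $S_0 = \emptyset$ as the trivial base. For the inductive step, assuming $S_n = \{u^j : j \le m_n\}$, I need to propagate the ``$u^{i^*}$ is added at step $n+1$'' condition to $u^{i^*-1}$. Spatial translation invariance of $Y$ implies that if $S$ is admissible in the definition of $V_n(i^*-1)$, then the shifted set $uS := \{u^{j+1} : u^j \in S\}$ is admissible for $V_n(i^*)$: indeed $u^{i^*} \notin uS$ (since $u^{i^*-1} \notin S$) and $uS \supset uS_n \supset S_n$ (by the inductive downward closure of $S_n$). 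This shift also gives the identity
\[
J(u^{i^*-1}, S) = J(u^{i^*}, uS) + \E_{u^{i^*}}\!\bigl[\delta(\tau_{uS})\,g(Y_{\tau_{uS}})\bigr],
\qquad g(y) := (K - y/u)^+ - (K - y)^+,
\]
and the elementary bound $g(y) \le y(u-1)/u$ reduces the inductive step to establishing the uniform estimate $\E_{u^{i^*}}[\delta(\tau_{uS})\, Y_{\tau_{uS}}] \le u^{i^*}$ over all admissible $S$. This last estimate is the real technical crux, and I expect to close it via a discount-adjusted optional stopping argument specific to the hyperbolic form of $\delta$ and the gambler's-ruin hitting distribution of the random walk on $\{u^j : j \in \mathbb{Z}\}$.
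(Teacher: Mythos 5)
Your Step 1 is correct and is in fact a nice self-contained verification that $\{u^i:i\le n_0\}$ is a mild equilibrium (the reduction to $(1-\alpha_1)\sum_{j=0}^{k-1}u^j\ge 1-\alpha_k$ via $\alpha_k\ge\alpha_1^k$ is sound, using skip-freeness and log-subadditivity of the hyperbolic discount), and the frame of Step 2 (minimality of $S_\infty$ from Lemma \ref{L21}, plus downward closure, plus the boundary test at $u^{n^*+1}$) would close the argument. The problem is that the entire proof hangs on downward closure of $S_\infty$, and the estimate you reduce it to is \emph{false}. You need $\E_{u^{i^*}}[\delta(\tau_{uS})\,Y_{\tau_{uS}}]\le u^{i^*}$ uniformly over all admissible $S$, but admissible sets may contain points far above $u^{i^*}$. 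Take $S=S_n\cup\{u^M\}$ with $M$ large, so that $\tau_{uS}$ is the exit time of the interval $(u^{m_n+1},u^{M+1})$. On the event of exiting at the top, $Y_{\tau_{uS}}=u^{M+1}$ grows geometrically in $M$, the exit-at-top probability decays at worst geometrically at rate $\bigl(p/(1-p)\bigr)^M$, and $pu/(1-p)>1$ is exactly the (strict) submartingale condition $p>1/(1+u)$; meanwhile $\delta(\tau_{uS})$ decays only polynomially in $M$ on that event. Hence $\E_{u^{i^*}}[\delta(\tau_{uS})Y_{\tau_{uS}}]\to\infty$ as $M\to\infty$ (and it diverges like $u^M/M^3$ even in the martingale case $p=1/(1+u)$). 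The sharper bound $g(y)\le\frac{u-1}{u}(y\wedge K)$ does not rescue the argument either, since $\E_{u^{i^*}}[\delta(\tau_R)(Y_{\tau_R}\wedge K)]$ can exceed $u^{i^*}$ already for $R=\{u^{i^*-1},u^{i^*+1}\}$ when $\lambda$ is large and $pu+(1-p)/u>1$. So the ``technical crux'' you defer is not merely unproved; as stated it cannot be proved, and the inductive step needs a different mechanism.

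For comparison, the paper does not prove downward closure of the iterates at all. It first pins down $S_1$ between two explicit down-sets, $\{u^m:m\le m_0\}\subset S_1\subset\{u^m:m\le n_0\}$, using the characterization of $S_1$ in Lemma \ref{L0}; the upper inclusion propagates to all $S_n$ via the one-step comparison $(K-u^m)^+\le(K-u^{m-1})^+\alpha_1$ for $m>n_0$, and the lower bound is pushed up one level per iteration ($u^{m_0+i}\in S_{i+1}$) by using the submartingale property of $Y$ to dispose of candidate sets containing points above the current state. If you want to keep your architecture, you could replace your shift argument by a direct proof that $i\mapsto f(u^i)-V_n(i)$ changes sign at most once on $(0,K)$, e.g.\ by adapting the monotonicity of $j\mapsto(K-u^j)/(K-u^{j-k})$ that underlies Lemma \ref{L0}; but some input of this kind, exploiting the specific put payoff rather than a universal bound on $\E[\delta(\tau)Y_\tau]$, is unavoidable.
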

\begin{proof} Since for any $u^i \ge K$ and any $S \subset \mathbb{Y}\backslash \{u^i\}$, $f(u^i) = 0 \le \sup J(u^i, S)$, $S_{\infty} \subset (0, K)\cap \mathbb{Y}$. In the following we only consider $u^i$ with $i  \le  \lfloor \log_u K \rfloor$. Consider sequence $\{\frac{K - u^m}{K}\}_{m \le  \lfloor \log_u K \rfloor}$. It is easy to check that \[
\frac{K - u^{m-1}}{K} > \frac{K - u^m}{K} > 0, \quad \forall m \le  \lfloor \log_u K \rfloor,
\]
and $\lim_{m \to -\infty} \frac{K - u^{m}}{K} = 1$. Then there exists $m_0 \le \lfloor \log_u K \rfloor$ such that \[
\frac{K - u^{m_0}}{K} > \alpha_1\ge \frac{K - u^{m_0+1}}{K}.
\]
Then $K- u^{m_0} > K \alpha_1 > (K- u^{m}) \alpha_{m_0-m}, \,\, \forall m < m_0$. By Lemma \ref{L0}, $u^{m_0} \in S_1$.
Since $K- u^{m} > K- u^{m_0} > K \alpha_1$ for all $m < m_0$, by similar argument, $u^{m} \in S_1, \forall m < m_0$. Therefore \[
\{u^m: m \le m_0\} \subset S_1.
\]
Consider the sequence $\{\frac{K - u^n}{K- u^{n-1}}\}_{n \le  \lfloor \log_u K \rfloor}$. It is easy to check that \[
 \frac{K - u^{n-1}}{K- u^{n-2}} > \frac{K - u^n}{K- u^{n-1}} \geq 0, \quad \forall n \le  \lfloor \log_u K \rfloor,
\]
and $\lim_{n \to -\infty} \frac{K - u^{n}}{K- u^{n-1}} = 1$. Then there exists $n_0 \le \lfloor \log_u K \rfloor$ such that
\begin{equation}\label{ee101}
\frac{K - u^{n_0}}{K- u^{n_0-1}} > \alpha_1 \ge \frac{K - u^{n_0+1}}{K- u^{n_0}}.
\end{equation}
Then for any $n \ge n_0+1$, $(K- u^{n})^+\le (K- u^{n-1})^+ \alpha_1$. Thus $u^n \not\in S_1, \,\, \forall n > n_0$. That is \[
\{u^m: m \le m_0\} \subset S_1 \subset \{u^m: m \le n_0\} 
\]

Next we claim that for all $n \in \mathbb{N}$, $S_{n} \subset \{u^m: m \le n_0\}$. We will prove this claim by induction. By the above discussion, this claim holds for $n = 1$. Suppose $S_{n} \subset \{u^m: m \le n_0\}$ for $n \ge 1$. Then 
for any $m > n_0$, 
\[(K - u^{m})^+ \le (K- u^{m-1})^+ \alpha_1 \le \sup_{S: S_n \subset S \subset \mathbb{Y}\backslash\{u^{m}\} } J(u^{m}, S),\]
 which implies $u^m \not\in S_{n+1}$ for all $m > n_0$ and $S_{n+1} \subset \{u^m: m \le n_0\}$. As a result, $S_{\infty} \subset \{u^m: m \le n_0\} $.

If $m_0 = n_0$, then we have $S_1 =  \{u^m: m \le n_0\}  \subset S_{\infty}$. Thus $S_{\infty} =  \{u^m: m \le n_0\}$.

If $m_0 < n_0$, let $k = n_0- m_0$. Consider $u^{m_0+i}, i \in \{0, 1, 2, \cdots, k\}$. We claim that $u^{m_0+i} \in S_{i+1}$. Then we obtain $\{u^m: m \le n_0\}  \subset  S_{\infty}$.

Next we will prove this claim by induction. The claim holds for $i = 0$. Suppose $u^{m_0+i} \in S_{i+1}$. Then $ \{u^m: m \le m_0+i\} \subset S_{i+1}$. Consider the case when $u^{m_0+i+1} \not\in S_{i+1}$.
Note that
$$\sup_{S: S_{i+1} \subset S\subset \mathbb{Y}\backslash\{u^{m_0+i+1}\} } J(u^{m_0+i+1}, S)\leq\max_{m_0+i+2\leq k\leq n_0} J(u^{m_0+i+1},\{u^{m_0+i},u^k\})\vee J(u^{m_0+i+1},\{u^{m_0+i}\})$$
 As $Y$ is a submartingale, $K-u^{m_0+i+1} > J(u^{m_0+i+1},\{u^{m_0+i},u^k\})$ for any $k$ satisfying $m_0+i+2\leq k\leq n_0$. This together with \eqref{ee101} implies that
\[K-u^{m_0+i+1} >  \sup_{S: S_{i+1} \subset S\subset \mathbb{Y}\backslash\{u^{m_0+i+1}\} } J(u^{m_0+i+1}, S).\]
Thus, $u^{m_0+i+1} \in S_{i+2}$.

Therefore the iteration method will terminate within $n_0-m_0+1$ steps and we obtain $S_{\infty} =  \{u^m: m \le n_0\}$ where $n_0$ satisfies $\frac{K - u^{n_0}}{K- u^{n_0-1}} > \alpha_1 \ge \frac{K - u^{n_0+1}}{K- u^{n_0}}$. Equivalently, \[S_{\infty} = \left\{u^i: i \le  \left\lceil \log_u \left(\frac{1-\alpha_1}{u-\alpha_1}K\right) \right\rceil\right\}.\]
\end{proof}
}
\subsubsection{Discussion on how non-standard discounting affects the value of the option} 
%
Consider the optimal stopping problem
\begin{equation}\label{1001}
U(y):=\sup_{\tau\in\mathcal{T}}\E_y\left[\frac{1}{1+\beta\tau}(K-Y_\tau)^+\right].
\end{equation}
Let
\begin{equation}\label{1002}
\tau^*:=\inf\left\{t\geq 0:\ \delta(t)Y_t\geq\sup_{\tau\in\mathcal{T}_t}\E\left[\delta(\tau)(K-Y_\tau)^+\Big|\mathcal{F}_t\right]\right\},
\end{equation}
where $\mathcal{T}_t$ is the set of stopping times taking values in $[t,\infty]$. From the classical theory of optimal stopping we know that $\tau^*$ is an optimal solution for the problem \eqref{1001}. Recall that
$$\tau_{S_\infty}:=\inf\{t\geq 0:\ Y_t\in S_\infty\}$$
is the stopping time corresponding to the optimal mild equilibrium $S_\infty$, where $S_\infty$ is obtained from the iteration in Proposition \ref{prop2}. We have the following.

\begin{prop}\label{propdiscussion} Suppose $\log_u (\frac{1-\alpha_1}{u-\alpha_1}K)$ is not an integer. Then
\begin{equation}\label{1003}
\tau_{S_\infty}\leq\tau^*.
\end{equation}
\end{prop}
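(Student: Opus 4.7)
The plan is to compare the pre-committed stopping region to the equilibrium stopping region $S_\infty$. Define
\[
V(t,y) := \sup_{\tau\in\mathcal{T}_t}\E\!\left[\delta(\tau)(K-Y_\tau)^+ \mid Y_t=y\right],\qquad S^*_t := \{y : V(t,y) = \delta(t)(K-y)^+\},
\]
so that $\tau^* = \inf\{t\ge 0 : Y_t \in S^*_t\}$ and (by right-continuity of $Y$ and Snell envelope theory) $Y_{\tau^*} \in S^*_{\tau^*}$. If I can show $S^*_t \subseteq S_\infty$ for every $t \ge 0$, then $Y_{\tau^*}\in S_\infty$, which gives $\tau_{S_\infty} \leq \tau^*$ pathwise. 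The whole task thus reduces to proving that for every $y\notin S_\infty$ and every $t\ge 0$, $V(t,y) > \delta(t)(K-y)^+$.

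For the case $t > 0$, I would plug the admissible trial stopping time $\tau = t + \tau_{S_\infty}$ into the supremum and use the Markov property to get $V(t,y)\ge \E_y\!\left[\delta(t+\tau_{S_\infty})(K-Y_{\tau_{S_\infty}})^+\right]$. Since $y\notin S_\infty$ forces $\tau_{S_\infty}>0$ a.s., the hyperbolic identity $(1+\beta t)(1+\beta\sigma) = 1+\beta(t+\sigma)+\beta^2 t\sigma > 1+\beta(t+\sigma)$ for $t,\sigma>0$ upgrades the log-subadditivity to the strict inequality $\delta(t+\sigma)>\delta(t)\delta(\sigma)$. Combined with the fact that $(K-Y_{\tau_{S_\infty}})^+>0$ on the positive-probability event $\{\tau_{S_\infty}<\infty\}$ (because $Y_{\tau_{S_\infty}}\le u^{n_0}<K$ there), I obtain $V(t,y) > \delta(t)\,J(y,S_\infty) \ge \delta(t)(K-y)^+$, where the last step is the mild-equilibrium property of $S_\infty$ applied to the payoff $f(y)=(K-y)^+$.

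The delicate case is $t=0$, where log-subadditivity alone only gives a non-strict bound. I would split according to whether $y\ge K$ or $y<K$. If $y\ge K$, then $(K-y)^+=0$ while $V(0,y)\ge J(y,S_\infty)>0$ because with positive probability $Y$ reaches $S_\infty\subset(0,K)$ in finite time, yielding a strictly positive payoff. If $y<K$ and $y\notin S_\infty$, write $y = u^{n_0+k}$ with $k\ge 1$; since $u^{n_0+k-1}=y/u<y<K$, the admissible trial strategy $\tau = \tau_{\{u^{n_0+k-1}\}}$ gives $V(0,y)\ge (K-u^{n_0+k-1})\alpha_1$. Here the non-integer hypothesis on $\log_u\!\bigl(\tfrac{(1-\alpha_1)K}{u-\alpha_1}\bigr)$ is precisely what turns the right inequality in \eqref{ee101} into a strict one, $\alpha_1 > \tfrac{K-u^{n_0+1}}{K-u^{n_0}}$; combined with the strict decreasing behaviour of $n\mapsto\tfrac{K-u^n}{K-u^{n-1}}$ on $\{u^n<K\}$ (already recorded in the proof of Proposition \ref{prop2}), this yields $\alpha_1>\tfrac{K-u^{n_0+k}}{K-u^{n_0+k-1}}$ for every $k\ge 1$, hence $V(0,y)\ge (K-u^{n_0+k-1})\alpha_1 > K-u^{n_0+k} = (K-y)^+$, as required.

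The main obstacle is the $t=0$ case: log-subadditivity only produces a non-strict bound there, and the non-integer hypothesis is consumed precisely to make the one-step-down trial strategy strictly improve upon stopping at every $y\notin S_\infty$ with $y<K$. Once the strict inequality $V(t,y) > \delta(t)(K-y)^+$ is established for every $t\ge 0$ and every $y\notin S_\infty$, the inclusion $S^*_t\subseteq S_\infty$ holds uniformly in $t$; since $Y_{\tau^*}\in S^*_{\tau^*}\subseteq S_\infty$, the conclusion $\tau_{S_\infty}\le\tau^*$ follows.
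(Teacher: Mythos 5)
Your proof is correct and follows essentially the same route as the paper's: both reduce the claim to showing that the pre-commitment stopping region at every time $t$ is contained in $S_\infty$, and the time-$0$ inclusion is established by exactly the same one-step-down trial strategy $\tau_{\{u^{n-1}\}}$ together with the non-integrality hypothesis turning \eqref{ee101} into a strict inequality. The only cosmetic difference is that for $t>0$ you argue directly via the mild-equilibrium property of $S_\infty$ and the strict super-multiplicativity of hyperbolic discounting, whereas the paper first shows $A_t\subset A_0$ and then reuses the time-$0$ computation; both hinge on the same inequality $\delta(t+\sigma)\ge\delta(t)\delta(\sigma)$.
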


Note that $\tau^*$ is an optimal pre-commitment strategy. That is, it is a strategy which is carried out based on the initial preference, and the agent commits to this strategy over the whole planning horizon and ignores the change of her future preference. On the other hand, $\tau_{S_\infty}$ is an equilibrium strategy (sophisticated strategy) which incorporates the change of preference. To be more specific, by using strategy $\tau_{S_\infty}$ the agent seriously takes the possible change of her future preference into consideration, and works on consistent planning: a strategy such that once it is enforced over time, all her future selves have no incentive to deviate from it. Proposition \ref{propdiscussion} indicates that with the recognition of the change of preference, the agent would actually expedite the exercise of the American put option.

As $\tau_{S_\infty}$ may not be optimal for the problem \eqref{1001}, the use of the equilibrium strategy $\tau_{S_\infty}$ will lower the expected payoff, if such evaluation is based on the initial preference. However, when the change of future preference is considered, there is no unique/proper way to define the dynamically optimal expected payoff over time. In this case, the equilibrium strategy is carried out such that the agent's future selves will not regret the decision.  

\begin{proof}[\textbf{Proof of Proposition \ref{propdiscussion}}]
By the Markov property of $Y$, we can rewrite \eqref{1002} as
\begin{equation}
\tau^*=\inf\{t\geq 0:\ Y_t\in A_t\},
\end{equation}
where
$$A_t:=\left\{y\in\mathbb{Y}:\ y\geq\sup_{\tau\in\mathcal{T}}\E_y\left[\frac{1+\beta t}{1+\beta(t+\tau)}(K-Y_\tau)^+\right]\right\}.$$
It is easy to see that $A_t\subset A_0$ for any $t\geq 0$. We claim that $A_0\subset S_\infty$, which further implies that $A_t\subset S_\infty$ for $t\geq 0$ and thus \eqref{1003}.

Indeed, take $u^n\in A_0$. Obviously $u^n\in(0,K)$. Then we have that
$$K-u^n=U(u^n)\geq J(u^n,\{u^{n-1}\})=\alpha_1(K-u^{n-1}),$$
which implies that
$$n\leq \log_u \left(\frac{1-\alpha_1}{u-\alpha_1}K\right)+1.$$
By assumption, we have that
$$n\leq n_0,$$
which implies $u^n\in S_\infty$.
\end{proof}
\begin{remark}
The assumption in Proposition \ref{propdiscussion}, i.e., $\log_u \left(\frac{1-\alpha_1}{u-\alpha_1}K\right)$ not being an integer, is very weak, since for it holds for a.e. $u$ and $K$.
\end{remark}
\begin{remark}
Here $\tau^*$ considered in \eqref{1002} is the smallest optimal solution for the problem \eqref{1001}. If it is replaced by the largest optimal solution, then the assumption in Proposition \ref{propdiscussion} is not needed.
\end{remark}

\section{Exact Containment of Equilibria: Optimal Mild $\subsetneqq$ Strong $\subsetneqq$ Weak  $\subsetneqq$ Mild}
In this section, we will use an example to illustrate that a mild equilibrium may not be a weak equilibrium, a weak equilibrium may not be a strong equilibrium and a strong equilibrium may not be an optimal mild equilibrium. 

Consider a two-state continuous-time Markov chain $X_t \in \{a, b\}$ for $t\ge 0$. Assume $a>0, b>0$ and without loss of generality we assume   $a>b$. The generator is\[
Q = \begin{bmatrix}
-\lambda_a & \lambda_a \\
\lambda_b & -\lambda_b
\end{bmatrix},
\]
where $\lambda_a>0$ and $\lambda_b>0$.

There are four subsets of $\{a, b\}$. Clearly $S = \emptyset$ and $S = \{b\}$ cannot be mild equilibria and $ S= \{a,b\}$ is a mild equilibrium. Next, let's check when $S = \{a\}$ is a mild equilibrium.

By definition $S = \{a\}$ is a mild equilibrium if and only if \[
b \le a\E_b[\delta(T_b)] = a \int_0^{\infty} \delta(t) \lambda_b e^{-\lambda_b t} dt.
\]
Consider the following cases.

(i) If $\frac{b}{a} = \int_0^{\infty} \delta(t) \lambda_b e^{-\lambda_b t} dt<1$, then both $\{a\}$ and $\{a,b\}$ are optimal mild equilibria and thus both are strong equilibria.

(ii) If $\frac{b}{a} < \int_0^{\infty} \delta(t) \lambda_b e^{-\lambda_b t} dt<1$, then $\{a\}$ is the only optimal mild equilibrium, which is also a strong equilibrium. But the mild equilibrium $\{a, b\}$ may not be a weak equilibrium. For example, when $\frac{b}{a}< \frac{\lambda_b}{\lambda_b -\delta'(0)}<1 $, the second condition for weak equilibrium is violated at state $b$, thus it is not a weak equilibrium.

(iii) $\frac{\lambda_a}{\lambda_a-\delta'(0)}<1<\frac{a}{b}$ holds automatically since $a>b$ and $\delta'(0)<0$. If $\frac{\lambda_b}{\lambda_b -\delta'(0)} <\frac{b}{a} < \int_0^{\infty} \delta(t) \lambda_b e^{-\lambda_b t} dt<1$,  then $\{a,b\}$ is not an optimal mild equilibrium, but it is a weak equilibrium and also a strong equilibrium.

(iv) If $\frac{\lambda_b}{\lambda_b -\delta'(0)} = \frac{b}{a} <1$, $\{a,b\}$ is a weak equilibrium, but it may not be a strong equilibrium, i.e.  condition (\ref{ii2}) on strong equilibrium may not hold at state $b$. This can be shown by computing the related term of order $\varepsilon^2$.

Since \[
\PP(X_{\varepsilon} =a | X_0= b) = \lambda_b \varepsilon -\frac{\lambda_b^2 +\lambda_a\lambda_b}{2}\varepsilon^2 +o(\varepsilon^2),
\]
and \[
\PP(X_{\varepsilon} =b | X_0= b) =1- \lambda_b \varepsilon +\frac{\lambda_b^2 +\lambda_a\lambda_b}{2}\varepsilon^2 +o(\varepsilon^2),
\]
we have \,\, $b - \E_b[\delta(\varepsilon)X_{\varepsilon}]$
\begin{align*} 
&=b - \delta(\varepsilon)[a\PP(X_{\varepsilon} =a | X_0= b) + b\PP(X_{\varepsilon} =b | X_0= b) ]\\
&= b - (1+\delta'(0)\varepsilon +\frac{\delta''(0)}{2}\varepsilon^2 +o(\varepsilon^2))[a(\lambda_b \varepsilon -\frac{\lambda_b^2 +\lambda_a\lambda_b}{2}\varepsilon^2 +o(\varepsilon^2))+b(1- \lambda_b \varepsilon +\frac{\lambda_b^2 +\lambda_a\lambda_b}{2}\varepsilon^2 +o(\varepsilon^2))]\\
&= (b\lambda_b - a\lambda_b -b\delta'(0))\varepsilon + [b(\lambda_b\delta'(0)-\frac{\lambda_b^2 +\lambda_a\lambda_b}{2}-\frac{\delta''(0)}{2}) -a(\delta'(0)\lambda_b- \frac{\lambda_b^2 +\lambda_a\lambda_b}{2})]\varepsilon^2 +o(\varepsilon^2)
\end{align*}
Therefore when the first order term and the second order term respectively satisfy
\begin{equation}\label{e13}
b(\lambda_b-\delta'(0))-a\lambda_b = 0,
\end{equation}
and 
\begin{equation}\label{e14}
b(\lambda_b\delta'(0)-\frac{\lambda_b^2 +\lambda_a\lambda_b}{2}-\frac{\delta''(0)}{2}) -a(\delta'(0)\lambda_b- \frac{\lambda_b^2 +\lambda_a\lambda_b}{2})<0,
\end{equation}
$\{a, b\}$ is a weak equilibrium but not a strong equilibrium.
Using \eqref{e13}, \eqref{e14} can be simplified to 
\begin{equation}\label{eq:cond-se}
\lambda_a+\lambda_b < \frac{\delta''(0)- 2(\delta'(0)^2)}{-\delta'(0)}.
\end{equation}

An interesting case is when
$\delta(t) = \frac{1}{1+\beta t}$. Then \eqref{eq:cond-se} does not hold: $\delta'(0) = -\beta$ and $\delta''(0)= 2\beta^2$. In this case $\frac{\delta''(0)- 2(\delta'(0)^2)}{-\delta'(0)} = 0$, which contradicts $\lambda_a+\lambda_b >0$. That means if we have hyperbolic discount function, a weak equilibrium is always a strong equilibrium in the two-state setting.

But when $\delta(t) = (1+\beta t)^{-\frac{1}{2}}$, then it can easily be seen that \eqref{eq:cond-se} holds:  $\delta'(0) = -\frac{\beta}{2}, \delta''(0)= \frac{3}{4}\beta^2$ implies that when $0<\lambda_a+\lambda_b <\frac{\beta}{2}$ and $\frac{b}{a} =\frac{2\lambda_b}{2\lambda_b+\beta}$, \{a, b\} is a weak equilibrium but not a strong equilibrium. In this case, $\{a, b\}$ is not an optimal mild equilibrium.

\bibliographystyle{apacite}
\bibliography{reference}{}

\end{document}